\documentclass[sigconf]{acmart}
\AtBeginDocument{%
  }
\usepackage{latexsym}
\usepackage{amsfonts,amsmath}
\usepackage{array,supertabular}
\usepackage{anyfontsize}
\usepackage{pgfplots}
\usepackage{rotating}
\pgfplotsset{compat=1.15}
\usepackage{mathrsfs}
\usetikzlibrary{arrows}

\usepackage{stmaryrd}
\SetSymbolFont{stmry}{bold}{U}{stmry}{m}{n}
\usepackage{alltt}
\usepackage{epic,epsfig}
\usepackage{graphics}
\usepackage{algorithm}
\usepackage{algpseudocode}
\usepackage{algpseudocodex}
\usepackage{algorithm}
\usepackage{xcolor}
\usepackage{float}
\usepackage{tikz}
\usepackage{stackrel}
\usetikzlibrary{arrows.meta, positioning}
\usepackage{comment}
\usepackage{amsmath}

\setcopyright{acmlicensed}
\copyrightyear{2026}
\acmYear{2026}
\acmDOI{XXXXXXX.XXXXXXX}
\acmConference[Logic In Computer Science 2026]{LICS 2026}{June 03--05,
  2018}{Lisbon, Portugal}
\acmISBN{978-1-4503-XXXX-X/2018/06}

\newcommand{\lgx}[1]{| \hspace{-0.18mm}{#1} \hspace{-0.18mm}|}
\newcommand{\spc}{sp}
\mathchardef\mhyphen="2D
\newcommand{\card}{\mathrm{Card}}
\newcommand{\powset}{\mathcal P}

\newcommand{\K}{\ensuremath{\mathbf{K}}}
\newcommand{\nat}{\mathbb{N}}

\newcommand{\pipe}{\hspace{-0.21mm}|\hspace{-0.21mm}}

\newcommand{\inc}{\subseteq}

\newcommand{\Wpp}{\ensuremath{\tilde{W}}}

\newcommand{\chooseCCS}{\mbox{\texttt{ChCCS}}}
\newcommand{\sat}{\mbox{\texttt{Sat}}}

\newcommand{\CCS}{\mbox{\texttt{CCS}}}
\newcommand{\SF}{\mbox{\texttt{SF}}}
\newcommand{\CSF}{\mbox{\texttt{CSF}}}

\newcommand{\true}{\mbox{\texttt{True}}}

\newcommand{\all}{\mbox{\texttt{all}}}
\newcommand{\algand}{\mbox{\texttt{and}}}

\def\PSPACE{\mathbf{PSPACE}}
\def\card{\mathtt{Card}}

\def\NP{\mathbf{NP}}
\def\EXPSPACE{\mathbf{EXPSPACE}}

\def\NEXPTIME{\mathbf{NEXPTIME}}

\def\K{\mathbf{K}}

\def\Log2{\mathbf{\K De(2)}}
\def\At{\mathbf{At}}
\def\Fo{\mathbf{Fo}}

\newcommand{\lgu}{| \hspace{-0.18mm}u \hspace{-0.18mm}|}

\mathchardef\mhyphen="2D
\newcommand{\cal}{\mathcal}

\newcommand{\Bom}{\ensuremath{\Box^{\,\mhyphen}}}
\newcommand{\Ri}[1]{\ensuremath{R_{#1}}}
\renewcommand{\Log}[1]{\mathbf{\K De(#1)}}
\newcommand{\satW}{\mbox{\texttt{SatW}}}
\newcommand{\chooseW}{\mbox{\texttt{ChW}}}
\newcommand{\nextW}{\mbox{\texttt{NextW}}}

\newcommand{\voidsequence}{(\,)}

\newcommand{\sequencef}[3]{(#1_i)_{i\in[#2:#3]}}

\newcommand{\interf}[2]{[#1\!:\!#2]}
\newcommand{\intero}[2]{[#1\!:\!#2[}
\newcommand{\pair}[2]{\langle #1,#2\rangle}

\newcommand{\voidpair}{\pair{\voidsequence}{\voidsequence}}

\newcommand{\win}[3]{\langle (#1_i)_{i\in[0:#3]}, (#2_i)_{i\in[0:#3[}\rangle } 

\newcommand{\infwin}[2]{\langle (#1_i)_{0\leq i}, (#2_i)_{0\leq i}\rangle }
\newcommand{\partialwin}[4]{\langle (#2_i)_{i\in[#1:#4]}, (#3_i)_{i\in[#1:#4[}\rangle } 
\newcommand{\shrinkwin}[4]{\langle (#2_i)_{i\in[#1:#4]}, (#3_i[0])_{i\in[#1:#4[}\rangle }
\newcommand{\longwin}[4]{\langle (#1_i)_{i\in[0:#3]}, (#2_i)_{i\in[0:#3[}\rangle } 

\newcommand{\lbr}{[}
\newcommand{\rbr}{]}

\newcommand{\sizew}{n}
\newcommand{\dotminus}{{\stackbin{.}{-}}}
\newcommand{\uplim}[1]{\ensuremath{\chi(#1)}}

\begin{document}

\title{Parameterized complexity of \emph{n}-dense modal logics}

\author{Olivier Gasquet}
\email{olivier.gasquet@irit.fr}
\orcid{0009-0007-7083-9767}
\affiliation{%
  \institution{University of Toulouse - IRIT - CNRS}
  \city{Toulouse}
  \country{France}
}

\author{}
\email{}
\orcid{}
\affiliation{%
  \institution{}
  \city{}
  \country{}
}

\begin{abstract}
  Exact tight bounds of the complexity of the satisfiability problem for dense modal logics is a difficult question, likely somewhere between $\PSPACE$ and $\EXPSPACE$ depending of the logic under question. For a class of them, called here $n$-dense logics (characterized by axioms $\Box^n p\rightarrow \Box p$), we refine the known results --membership in $\NEXPTIME$-- in the light of parameterized complexity, as introduced in \cite{Downey}, and prove that they belong to the parameterized class para-$\PSPACE$: there exists a poly-space algorithm once the modal depth of the input is considered as a parameter. This is done by generalizing the novel analysis tool introduced in \cite{BalGasq25}, and therein called windows, to \emph{recursive windows}. 
\end{abstract}

\begin{CCSXML}
<ccs2012>
<concept>
<concept_id>10003752.10003790.10003793</concept_id>
<concept_desc>Theory of computation~Modal and temporal logics</concept_desc>
<concept_significance>500</concept_significance>
</concept>
<concept>
<concept_id>10003752.10003790.10003794</concept_id>
<concept_desc>Theory of computation~Automated reasoning</concept_desc>
<concept_significance>500</concept_significance>
</concept>
</ccs2012>
\end{CCSXML}

\ccsdesc[500]{Theory of computation~Modal and temporal logics}
\ccsdesc[500]{Theory of computation~Automated reasoning}

\keywords{Modal logics, Density, Satisfiability problem, Parameterized complexity, Tableaux}

\received{20 February 2007}
\received[revised]{12 March 2009}
\received[accepted]{5 June 2009}

\maketitle

\section{Introduction}

Modal logic constitutes a broad and well-established area of mathematical logic, with significant applications across a wide range of disciplines, from reasoning on programs to deontic aspect of reasoning. For instance, temporal logics play a central role in program verification, dynamic and epistemic logics are widely used in the analysis of multi-agents systems. Modal logics are commonly obtained by extending classical propositional logic with modal operators often denoted by $\Box$ (or $\Box_a$) which, according to the case can be interpreted as \emph{always}, \emph{ought to}, \emph{$a$ knows}, \emph{after action $a$}, etc. In all these uses, the question of determining the existence of a model for some logic $L$, the so-called $L$-sat problem, is crucial and constitutes a complete field of research, as well-as the design of algorithms that solve it. The complexity of these problems ranges from polynomial up to undecidability according to the logic, or the fragment of logic, considered. Nevertheless, most of the well-know modal logics have a $\PSPACE$-complete $L$-sat problem. 

Models for these logics are mainly based on so-called \emph{Kripke frame} which are graphs $(W,R)$ where nodes of $W$ are labeled with a valuation over a set of propositional variables, and, according to the logic $L$ in question, these graph must have additional properties (e.g.\ transitivity for the well-know logic $\K 4$, reflexivity, etc.). 

Among all modal logics, those whose frames have properties involving intermediary nodes (e.g.\ $(x,y)\in R$ implies $\exists z:(x,z)\in R$ and $(z,y)\in R$) have received less attention and the complexity of their sat-problem is not as well investigated as others, thought there is a general result of \cite{Lyon24} which establishes a quite high lower bound, namely membership in $\EXPSPACE$. 

In this work, we concentrate on a class of normal modal logics, referred to as $n$-dense modal logics and are characterized by axioms of the form $\Box^n p\rightarrow \Box p$ ($n>1$) and frames satisfying the property: $R\subseteq R^n$ (i.e.\ the existence of an edge between two nodes implies that of an $n$-long path between them). Their sat-problem can easily be proved to be in $\NEXPTIME$ by means of the filtration technique which goes back to \cite{Gabbay1972} (though it seems to originate in a paper \cite{LemmonScott66} which remained unpublished for many years). On another hand, we prove in section \ref{pspace-hardness} that they are $\PSPACE$-hard in \cite{BalGasq25}. Then, we will not give tight  bounds --it will remain a difficult open problem-- but we will make a step to finer characterize its complexity. Indeed, in the complexity approach of problems, it can sometimes been taken into account that if some parameter is fixed, then the problem becomes easier. So-called \emph{parameterized complexity theory} were mainly introduced by Downey and Fellows in \cite{Downey} and provides a framework for a refined analysis of hard algorithmic problems which are in $\NP$. It has lead to the identification of the Fixed-Parameter Tractable class (or para-$P$): if some parameter is fixed, then the problem becomes essentially polynomially hard. This is the case for Boolean satisfiability (the parameter being the number of variables) or the Vertex Cover (the number of vertices). Parametrized-complexity has later been extended by Flume \& Grohe in \cite{FLUM2003291} for other complexity classes, like para-$\PSPACE$: \emph{a problem $(Q, \kappa)$ belongs to the class para-$\PSPACE$ if there is a computable function $f:\nat\rightarrow\nat$, a polynomial $p$ and an algorithm that, given an input $x$ of size $\lgx{x}$, decides if $x\in Q$ in space $f(\kappa(x)).p(\lgx{x})$}. In the sequel, we will prove that there exists such one parameter for our problem, namely the so-called modal depth of the formulas (Note that this fact is not provable from the filtration technique, as it makes use of a model that contains exponentially many nodes independently from the modal depth of formulas). This limitation may be seen as acceptable in the sense that, for concrete applications, the modal depth is usually small and limited. 

In this paper, we design an algorithm, a tableau calculus more precisely, for the $2$-dense modal logic ($\K+\Box\Box p\rightarrow\Box p$). Semantical tableaux, or simply tableaux, are known as a powerful way of establishing complexity results for a wide range of logics by
means of decision procedures consisting in (un)successful attempts to proving the existence of a model. This algorithm will make use of a new technique, called \emph{windows}, that was introduced by Balbiani and Gasquet in \cite{BalGasq25} and in \cite{IGPL-Gasquet25} and which are small (polynomial) part of a tableau. We propose a consequent generalization of their work to \emph{recursive windows} and this will allow us to prove that the $n$-dense satisfiability problem is in para-$\PSPACE$ (the parameter being the modal depth of the input formulas), then we will argue for its extension to the general case of $n$-dense logics. The choice of focusing on the $2$-dense case is essentially motivated by the sake of clarity. 

After some definitions in the next section, we will prove $\PSPACE$-hardness of $n$-dense logic in section \ref{pspace-hardness}, then in section{tableaux} we will briefly review the basics of tableaux and their completeness proofs. We will go on in section \ref{simple_windows} by informally presenting the notion of window and will formally present it in section \ref{windows}. Sections \ref{algorithm} and \ref{analysis} will be devoted to the algorithm we designed and to its analysis (soundness, completeness and complexity) so to conclude for $2$-dense satisfiability. Finally, in section \ref{general_case} we will discuss the extension to all $n$-dense logics. 
\section{Basic definitions and settings}
\paragraph{Syntactical aspects} In the sequel, we will frequently identify finite sets of formulas with the conjunction of them: $s$ will stand for $\bigwedge_{\phi\in s} \phi$.

The language $\Fo$ of our logic is defined by $\Phi ::= \bot\pipe p\pipe \neg\Phi\pipe(\Phi\wedge\Phi)\pipe\Box\Phi$ where $p$ belongs to a given set $\At$ of propositional variables. As usual, $\Phi\vee\Psi$ abbreviates $\neg(\Phi\wedge\Psi)$ and $\Diamond\Phi$ abbreviates $\neg\Box\neg\Phi$. As usual too, $d(\phi)$ will denote the modal degree (or depth) of $\phi$ and $\lgx{\phi}$ its length\footnote{$d(p)=d(\bot)=0,d(\phi\wedge \psi)=\max\{d(\phi),d(\psi)\}, d(\neg\phi)=d(\phi),d(\Box\phi)=1+d(\phi)$ and $\lgx{p}=\lgx{\bot}=1, \lgx{\phi\wedge \psi}=\lgx{\phi}+\lgx{\psi}, \lgx{\neg\phi}=1+\lgx{\phi},\lgx{\Box\phi}=1+\lgx{\phi}$.}, both extend to sets by $\max$ and $+$ respectively. 

Our $n$-dense logics, henceforth denoted by $\Log{n}$, can be defined as follows: $\phi\in\Log{n}$ iff $\vdash\phi$ with
\begin{itemize}
    \item for all propositional tautologies $\phi$: $\vdash \phi$
    \item  Necessitation rule: if $\vdash \phi$ then $\vdash \Box\phi$
    \item Axiom K: for all $\phi,\psi\in\Fo\colon \vdash \Box(\phi\rightarrow\psi)\rightarrow\Box\phi\rightarrow\Box\psi$
    \item Axiom of $n$-density: for all $\phi\in\Fo\colon \vdash\Box^n\phi\rightarrow\Box\phi$
\end{itemize}
A formula $\phi\in\Fo$ is said to be $\Log{n}$-consistent iff $\neg\phi\not\in\Log{n}$. \\

\noindent The set $\SF(\Phi)$ of \emph{subformulas} of $\Phi$ is defined as usual and naturally extends to $\SF(s)$ where $s$ is a set of formulas: $\SF(s)=\bigcup_{\phi\in s}\SF(\phi)$. The set $\CSF(s)$ of \emph{classical subformulas} has the same definition as $\SF$ but where modal subformulas are considered as propositional variables (thus $\CSF(p\wedge \neg\Box q)=\{p,\neg\Box q, \Box q\}$). A finite set $u$ of formulas is a \emph{Classically Consistent Saturation} ($\CCS$) (denoted by $u\in\CCS$) iff  for all formulas $\phi,\psi$:
\begin{itemize}
\item if $\phi \wedge \psi\in u$ then $\phi \in u$ and $\psi\in u$,
\item if $\neg (\phi \wedge \psi)\in u$ then $\neg \phi\in u$ or $\neg \psi\in u$,
\item if $\neg \neg \phi\in u$ then $\phi \in u$,
\item $\bot\not\in u$,
\item if $\neg \phi\in u$ then $\phi\not\in u$.
\end{itemize}
In addition, given a finite set $s$ of formulas, we write $u\in\CCS(s)$ iff $u\in\CCS$ and $s\subseteq u$. Note that for such $u$: $\lgx{u}$ is linear in $\lgx{s}$. \\
The reader will have notice that $\CCS(s)$ are in fact a set-theoretic version of a classical Disjunctive Normal Form, and as such verifies $s\leftrightarrow \bigvee_{u\in\CCS(s)}u$. 


\paragraph{Sets and relations} A binary relation $R$ over a set $W$ is a subset of $W^2$. The composition of relations is: $R\circ S\triangleq \{(x,z)\colon \exists y$ s.th. $(x,y)\in R$ and $(y,z)\in R\}$, exponentiation for $n>0$: $R^n\triangleq$ if $n=1$ then $R$ else $R\circ R^{n-1}$, and inverse: $R^{-n} \triangleq \{(y,x)\colon (x,y)\in R^n\}$. The powerset of a set $X$ is denoted by $\powset(X)$. 

\paragraph{Kripke semantics} A Kripke-model is a triple $(W,R,V)$ where $W$ is a non empty set of \emph{possible worlds}, $R$ a binary relation of $W$ and $V$ a function assigning a subset of $W$ with $p\in\At$. The pair $(W,R)$ is called a \emph{frame}. A frame, or a model, is said to be $n$-dense iff $R\subseteq R^n$.\\
A set of formulas $s$ is true at a world $w\in W$, denoted by $M,w\models s$ iff for some $u\in\CCS(s)$, all formulas of $u$ are classically true at $w$ and for all $\Box\phi\in u$ and for all $(w,w')\in R\colon M,w'\models \phi$~\footnote{And, as a consequence, for all $\neg\Box\phi\in u$ there exists $(w,w')\in R\colon M,w'\models \neg\phi$}. Concerning classical connectives, the usual induction-based definition of $\models$ is fulfilled by the notion of $\CCS$. Finally, $s$ is satisfiable in the class of $n$-dense models (or $n$-dense satisfiable) iff there exists an $n$-dense model $M=(W,R,V)$ and $w\in W$ such that $M,w\models s$. By the famous Salqvist's theorem, $n$-dense satisfiability and $\Log{n}$-consistency are known to be equivalent. Hence, we will talk of $\Log{n}$-satisfiability. A formula $\phi$ is valid iff $\neg\phi$ is not satisfiable. 

We establish some useful facts about $\CCS$: 
\begin{proposition}\label{prop-CCS}
For finite $u, v, s, s_1, s_2\subseteq \Fo$,
\begin{enumerate}
\item\label{One} if $u\in \CCS(s\cup v)$ and $v\in \CCS(s')$ then $u\in \CCS(s\cup s')$,
\item\label{Two} if $u\in \CCS(s\cup s')$ then it exists $v\in \CCS(s)$ and $v'\in\CCS(s')$ s.th.\ \ $v\cup v'=u$,
\item\label{Three} if $u\in \CCS(s'\cup v)$ and $v$ is a \CCS\ then it  exists $v'\in \CCS(s')$ s.th.\ $v\cup v' = u$,
\item\label{Four} if $u\in \CCS(s'\cup v)$ and $v$ is a \CCS\ then $d(u\setminus v)\leq d(s')$,
\item\label{Five} if $u$ is true at a world $x\in s$ of a $\Log{n}$-model $M$, then 
the set $\SF(u)\cap \{\phi\colon M,x\models \phi\}$ is in $\CCS(u)$.
\end{enumerate}
\end{proposition}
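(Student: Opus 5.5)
The five items are elementary consequences of the closure conditions defining a $\CCS$. I will prove them directly from the definition, treating formulas of $\CSF$ whose main connective is $\Box$ (and their negations) as atoms. The common tool is a \emph{restriction} construction. Given a $\CCS$ $u$ and a set $s\subseteq u$, let $u_s$ be the least subset of $u$ that contains $s$ and is closed under three operations:
\begin{itemize}
\item adding both conjuncts of a conjunction in $u_s$;
\item adding $\phi$ whenever $\neg\neg\phi\in u_s$;
\item for $\neg(\phi\wedge\psi)\in u_s$, adding one disjunct $\neg\phi$ or $\neg\psi$ that actually lies in $u$ (one exists because $u$ is a $\CCS$), fixing the choice once and for all.
\end{itemize}

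First I check that $u_s\in\CCS(s)$. The three closure conditions hold by construction. The conditions $\bot\notin u_s$ and "not both $\phi$ and $\neg\phi$" are inherited from $u$, since $u_s\subseteq u$. Moreover, $u_s\subseteq\CSF(s)$.

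\textbf{Item (\ref{One}).} Assume $u\in\CCS(s\cup v)$ and $v\in\CCS(s')$. Then $s'\subseteq v\subseteq u$, so $s\cup s'\subseteq u$ and $u\in\CCS(s\cup s')$ immediately.

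\textbf{Item (\ref{Two}).} Take $v=u_s$ and $v'=u_{s'}$; both are $\CCS$ by the claim above. Their union is contained in $u$, but it need not be all of $u$, since $u$ may contain extra formulas. I interpret the intended statement as $u$ being a minimal saturation, i.e.\ $u\subseteq\CSF(s\cup s')$ and generated from $s\cup s'$. If this is not the intended reading, I would repair it by putting all formulas of $u\setminus v$ into $v'$. To do this, set $v'=u_{s'\cup(u\setminus u_s)}$. This is a $\CCS$ containing $s'$, and $u_s\cup v'=u$.

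\textbf{Item (\ref{Three}).} Apply the same trick with $v$ in place of $u_s$: set $v'=u_{s'\cup(u\setminus v)}$. Since $v\cup v'\subseteq u$ and $v'\supseteq u\setminus v$, we get $v\cup v'=u$.

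\textbf{Item (\ref{Four}).} This needs the finer bound that every formula of $u\setminus v$ has depth at most $d(s')$. This is where I expect the real obstacle. With the repair above, $u\setminus v$ could contain arbitrary extraneous formulas, so the statement only holds for a minimal $u$. I will therefore make explicit the convention that a $\CCS$ of a set is generated from that set, i.e.\ $u\subseteq\CSF(s'\cup v)$. Under that convention every formula of $u$ is a classical subformula of either $s'$ or $v$. If it is not in $v$, it is obtained by the closure operations from $s'$, possibly through a disjunction choice. Since classical subformulas never increase modal depth, $d(u\setminus v)\le d(s')$.

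\textbf{Item (\ref{Five}).} Here I read the hypothesis "$x\in s$" as $x\in W$. Set $t=\SF(u)\cap\{\phi: M,x\models\phi\}$. Then $u\subseteq t$, because all formulas of $u$ are true at $x$. I check the $\CCS$ clauses by the truth definition:
\begin{itemize}
\item a true conjunction has true conjuncts, and these are subformulas;
\item a true $\neg(\phi\wedge\psi)$ has a true negated conjunct, and $\neg\phi$ or $\neg\psi$ is in $\SF$ up to the usual closure of $\SF$ under single negation (which I take as the paper's convention);
\item $\bot$ is never true;
\item $\phi$ and $\neg\phi$ are never both true.
\end{itemize}
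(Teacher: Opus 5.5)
Items 1 and 5 are fine. Item 5 is exactly the clause-by-clause check the paper intends, and your caveat about closing $\SF$ under single negations is legitimate. You are also right that item 4 is false under the literal definition of $\CCS(\cdot)$.

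\textbf{The gap is in item 4.} The convention you adopt, $u\subseteq\CSF(s'\cup v)$, does not support the step ``if it is not in $v$, it is obtained by the closure operations from $s'$''. A formula of $u\setminus v$ can just as well be a classical subformula of a member of $v$. Take $v=\{\neg(\Box\Box p\wedge q),\neg q\}$, $s'=\{r\}$ and $u=v\cup\{r,\Box\Box p\}$. Then $u$ is a \CCS\ with $s'\cup v\subseteq u\subseteq\CSF(s'\cup v)$, yet $d(u\setminus v)=2>0=d(s')$.

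Reading ``generated'' as ``closure of $s'\cup v$ under some choice of disjuncts'' does not save it either. The set $v\cup\{r,\neg\Box\Box p\}$ is such a closure: pick the other disjunct of $\neg(\Box\Box p\wedge q)\in v$. It breaks the bound in the same way.

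Your items 2--3 cannot be used to patch this. Under the literal reading they are vacuous, since $v=v'=u$ already works. Moreover, your witness $v'=u_{s'\cup(u\setminus v)}$ contains all of $u\setminus v$, so it carries no depth information. The paper, by contrast, derives item 4 precisely from item 3, using a witness $v'\in\CCS(s')$ with $d(v')=d(s')$.

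\textbf{What is missing} is the hypothesis the paper uses tacitly: members of $\CCS(X)$ are \emph{inclusion-minimal} saturations of $X$. This fits its remarks that they play the role of DNF disjuncts and have size linear in $X$. With it, your restriction tool closes the argument:
\begin{enumerate}
\item Since $v\subseteq u$, the set $v\cup u_{s'}$ is contained in $u$ and contains $s'\cup v$.
\item It is a \CCS: each closure clause concerns a single formula and is witnessed inside $v$ or inside $u_{s'}$, while $\bot$-freeness and consistency are inherited from $u$.
\item Minimality of $u$ forces $u=v\cup u_{s'}$, hence $u\setminus v\subseteq u_{s'}$.
\item Every element of $u_{s'}$ is reached from $s'$ by depth-non-increasing steps, so $d(u\setminus v)\le d(s')$.
\end{enumerate}
Replace $u_{s'}$ by an inclusion-minimal \CCS\ of $s'$ contained in $u$. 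The same argument then gives item 3 in the form the paper needs, namely a witness of depth at most $d(s')$; this is the paper's route $3\Rightarrow 4$. It likewise gives item 2, which you assert under your minimal reading but never actually argue.

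Be aware that under minimality, items 1 and 5 no longer hold verbatim. For example, $\{\Box p,p\}$ is not a minimal saturation of $\{\Box p\}$. So the convention has to be fixed item by item, and the paper glosses over this as well.
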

\begin{proof}
Item~(\ref{One}) is an immediate consequence of the properties of classical open branches of tableaux.
As for Item~(\ref{Two}), take $v=s\cap\CSF(u)$ and $v'=s'\cap\CSF(v)$.
Item~(\ref{Three}) follows from Item~(\ref{Two}).
Concerning Item~(\ref{Four}), if $u\in \CCS(s'\cup v)$ then by Item~(\ref{Three}), there exists $v'\in \CCS(s')$ and $v\cup v' = u$.
Therefore, $u\setminus v\subseteq v'$ and $d(u\setminus v)\leq d(v')=d(s')$.
Finally, about Item~(\ref{Five}), the reader may easily verify it 
by applying the definition clauses of $\models$. 
\end{proof}
An interesting point about $n$-dense models and frames, is that they are under disjoint union: if $(W,R)$ and $(W',R')$ are $n$-dense frames, then $(W,\sqcup W',R\sqcup R')$ is an $n$-dense frame too\footnote{Since $R\sqcup R'\subseteq R^n\sqcup R'n= (R\sqcup R')^n$.}. This is important to break the search for a model into independent subroutines. 

\section{\emph{n}-dense logics are PSPACE-hard}\label{pspace-hardness}
For all atoms $p$, let $\tau_{p}:\ \Fo\longrightarrow\Fo$ be the function inductively defined as follows:
\begin{itemize}
\item $\tau_{p}(q)=q$,
\item $\tau_{p}(\bot)=\bot$,
\item $\tau_{p}(\neg\phi)=\neg\tau_{p}(\phi)$,
\item $\tau_{p}(\phi\wedge\psi)=\tau_{p}(\phi)\wedge\tau_{p}(\psi)$,
\item $\tau_{p}(\square\phi)=\square(p\rightarrow\tau_{p}(\phi))$.
\end{itemize}
Obviously, for all atoms $p$ and for all $\phi\in\Fo$, $\pipe\tau_{p}(\phi)\pipe\leq5.\pipe\phi\pipe$.
\begin{lemma}\label{lemma:about:the:translation}
For all atoms $p$ and for all formulas $\phi$, if $p$ does not occur in $\phi$ then the following conditions are equivalent:
\begin{enumerate}
\item $\phi$ is valid in the class of all frames,
\item $\tau_{p}(\phi)$ is valid in the class of all frames,
\item $\tau_{p}(\phi)$ is valid in the class of all dense frames.
\end{enumerate}
\end{lemma}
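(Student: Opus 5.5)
I will prove the cycle of implications $(1)\Rightarrow(2)\Rightarrow(3)\Rightarrow(1)$; the second is trivial and the real work lies in the third. Throughout, ``dense'' is read as $n$-dense for a fixed $n>1$, that is, $R\subseteq R^n$, and the argument is uniform in $n$.

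For $(1)\Rightarrow(2)$ I use substitution-like reasoning on models. Given any model $M=(W,R,V)$, define the restricted relation $R_p=\{(x,y)\in R\colon y\in V(p)\}$ and the model $M_p=(W,R_p,V)$. A routine induction on $\psi$ shows that for every subformula $\psi$ of $\phi$ and every $x\in W$, $M,x\models\tau_p(\psi)$ iff $M_p,x\models\psi$. The only interesting case is $\Box$: $M,x\models\Box(p\rightarrow\tau_p(\psi))$ iff every $R$-successor satisfying $p$ satisfies $\tau_p(\psi)$, iff every $R_p$-successor satisfies $\psi$ (by induction). Since $M_p$ is a model over an arbitrary frame and $\phi$ is valid on all frames, $\tau_p(\phi)$ is true everywhere in $M$. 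The implication $(2)\Rightarrow(3)$ is immediate, because dense frames are frames.

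For $(3)\Rightarrow(1)$ I argue by contraposition. Suppose $\phi$ is falsified at some world of some model. Since $\K$ has the finite tree model property, I may take a finite tree model $M=(W,R,V)$ falsifying $\phi$ at its root. Because $p$ does not occur in $\phi$, I may freely reset $V(p)=W$. I then build a dense model $M'$ in which $p$ marks exactly the original worlds. For every edge $(x,y)\in R$, and recursively for every edge created along the way, I must insert intermediate worlds $z_1,\dots,z_{n-1}$ forming a path $x\,z_1\cdots z_{n-1}\,y$, with $p$ false at all the $z_i$ and with all other atoms valued arbitrarily.

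This step is the main obstacle, because every new edge itself requires an $n$-path. The cleanest construction I would try first is to make the new worlds reflexive: a reflexive point $z$ gives, for any edge $(x,z)$, the path $x\,z\,z\cdots z$. So take a single fresh reflexive world $z_{xy}$ for each original edge, set $R'=R\cup\{(x,z_{xy}),(z_{xy},z_{xy}),(z_{xy},y)\}$, and let $p$ be false at $z_{xy}$. Then each kind of edge has an $n$-path:
\begin{itemize}
\item the edge $(x,y)$ has the path $x\,z_{xy}\cdots z_{xy}\,y$;
\item the edge $(x,z_{xy})$ has the path $x\,z_{xy}\cdots z_{xy}$;
\item the edge $(z_{xy},y)$ has the path $z_{xy}\cdots z_{xy}\,y$;
\item the loop at $z_{xy}$ is its own $n$-path.
\end{itemize}
Hence $R'\subseteq R'^n$.

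It remains to check that this gives a countermodel. Restricting $R'$ to successors satisfying $p$ yields exactly $R$ on $W$, since the original worlds are the $p$-worlds and the new worlds have no $p$-successors except through original edges, which are already present. Applying the equivalence from the first step to $M'$, for $x\in W$ we get $M',x\models\tau_p(\psi)$ iff $M'_p,x\models\psi$. Moreover, the generated submodel of $M'_p$ from $x$ coincides with that of $M$, so $M'_p,x\models\psi$ iff $M,x\models\psi$. Therefore $\tau_p(\phi)$ fails at the root of the dense model $M'$.
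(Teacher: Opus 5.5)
Your proof is correct and follows the paper's route. For $(1)\Rightarrow(2)$ you use the same restricted-relation model, and for $(3)\Rightarrow(1)$ your fresh reflexive $\neg p$-world $z_{xy}$ per edge is exactly the paper's model with $W'=W\cup R$ and $V'(p)=W$. The only differences are minor: the appeal to the finite tree model property is unnecessary, since the construction works on any countermodel, and you verify the conclusion through the generated submodel of $M'_p$ where the paper uses a direct induction; the two checks are equivalent.
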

\begin{proof}
Let $p$ be an atom and $\phi$ be a formula.
Suppose $p$ does not occur in $\phi$.
Obviously, $\mathbf{(2)\Rightarrow(3)}$.
Consequently, it suffices to prove that $\mathbf{(1)\Rightarrow(2)}$ and $\mathbf{(3)\Rightarrow(1)}$.\\
$\mathbf{(1)\Rightarrow(2):}$
Suppose $\tau_{p}(\phi)$ is not valid in the class of all frames.
Hence, there exists a model $M=(W,R,V)$ and there exists $s\in W$ s.th.\ $M,s\not\models\tau_{p}(\phi)$.
Let $M^{\prime}=(W^{\prime},R^{\prime},V^{\prime})$ be the model s.th.\
\begin{itemize}
\item $W^{\prime}=W$,
\item for all $t,u\in W$, $tR^{\prime}u$ if and only if $tRu$ and $u\in V(p)$,
\item for all atoms $q$, $V^{\prime}(q)=V(q)$.
\end{itemize}
As the reader may easily verify by induction on $\psi\in\Fo$, if $p$ does not occur in $\psi$ then for all $t\in W$, $M,t\models\tau_{p}(\psi)$ if and only if $M^{\prime},t\models\psi$.
Since $p$ does not occur in $\phi$ and $M,s\not\models\tau_{p}(\phi)$, then $M^{\prime},s\not\models\phi$.
Thus, $\phi$ is not valid in the class of all frames.\\
$\mathbf{(3)\Rightarrow(1):}$
Suppose $\phi$ is not valid in the class of all frames.
Consequently, there exists a model $M=(W,R,V)$ and there exists $s\in W$ such that $M,s\not\models\phi$.
Without loss of generality, suppose $W$ and $R$ are disjoint.
Let $M^{\prime}=(W^{\prime},R^{\prime},V^{\prime})$ be the dense model such that
\begin{itemize}
\item $W^{\prime}=W\cup R$,
\item for all $t,u\in W$, $tR^{\prime}u$ iff $tRu$,
\item for all $t\in W$ and for all $(u,v)\in R$, $tR^{\prime}(u,v)$ iff $t=u$
\item for all $(t,u)\in R$ and for all $v\in W$, $(t,u)R^{\prime}v$ iff $u=v$,
\item for all $(t,u),(v,w)\in R$, $(t,u)R^{\prime}(v,w)$ iff $t=v$ and $u=w$,
\item for all atoms $q$, if $q\not=p$ then $V^{\prime}(q)=V(q)$.
\item $V^{\prime}(p)=W$.
\end{itemize}

As the reader may easily verify by induction on $\psi\in\Fo$, if $p$ does not occur in $\psi$ then for all $t\in W$, $M,t\models\psi$ if and only if $M^{\prime},t\models\tau_{p}(\psi)$.
Since $p$ does not occur in $\phi$ and $M,s\not\models\phi$, then $M^{\prime},s\not\models\tau_{p}(\phi)$.
Thus, $\tau_{p}(\phi)$ is not valid in the class of all dense frames.
\end{proof}
\begin{proposition}
$n$-dense satisfiability is $\PSPACE$-hard.
\end{proposition}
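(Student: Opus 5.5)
The plan is a polynomial-time many-one reduction from satisfiability in $\K$, which is $\PSPACE$-complete by Ladner's classical result, to $n$-dense satisfiability. The reduction is the translation $\tau_{p}$, where $p$ is an atom fresh for the input. Since $\pipe\tau_{p}(\phi)\pipe\leq 5.\pipe\phi\pipe$ and $\tau_{p}$ is computed by a single recursive pass, the reduction is clearly polynomial (in fact linear).

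The correctness argument goes through validity. Because $\tau_{p}$ commutes with negation, we have $\tau_{p}(\neg\phi)=\neg\tau_{p}(\phi)$. So $\phi$ is $\K$-satisfiable iff $\neg\phi$ is not valid in all frames. By Lemma~\ref{lemma:about:the:translation} this holds iff $\tau_{p}(\neg\phi)=\neg\tau_{p}(\phi)$ is not valid in the relevant class of dense frames, i.e.\ iff $\tau_{p}(\phi)$ is satisfiable there. Here $p$ does not occur in $\neg\phi$, as the lemma requires.

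The one point needing care is that Lemma~\ref{lemma:about:the:translation} speaks of ``dense'' frames, i.e.\ the case $n=2$. I need item~(3) for $n$-dense frames with arbitrary $n>1$.

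\begin{itemize}
\item The implications $(1)\Rightarrow(2)$ and $(2)\Rightarrow(3)$ do not depend on $n$.
\item For $(3)\Rightarrow(1)$ I will check that the very same model $M'$ from the proof is $n$-dense for every $n$.
\end{itemize}

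The key observation is that every new world $(t,u)\in R$ carries a reflexive loop in $R'$. Each edge of $R'$ can then be stretched into a path of length exactly $n$:
\begin{itemize}
\item an edge $tR'u$ with $t,u\in W$ yields $t\,R'\,(t,u)\,R'\cdots R'\,(t,u)\,R'\,u$, looping $n-2$ times at $(t,u)$;
\item an edge $tR'(t,u)$ yields $t\,R'\,(t,u)$ followed by $n-1$ loops;
\item an edge $(t,u)R'u$ yields $n-1$ loops followed by the final step to $u$;
\item a loop $(t,u)R'(t,u)$ yields $n$ loops.
\end{itemize}
Hence $R'\subseteq R'^{\,n}$.

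The inductive claim ``$M,t\models\psi$ iff $M',t\models\tau_{p}(\psi)$ for $t\in W$'' is untouched by this check. Indeed, $p$ is false exactly at the added worlds $(t,u)$, so the guarded boxes $\Box(p\rightarrow\cdot)$ only see the $R$-successors in $W$.

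I expect the main (mild) obstacle to be exactly this extension of the lemma from $2$-density to $n$-density, together with verifying the truth-preservation induction in the box case. Once the reflexive loops are noticed, the path-padding argument settles density uniformly. What remains is to assemble the chain of equivalences above and conclude that $n$-dense satisfiability is $\PSPACE$-hard.
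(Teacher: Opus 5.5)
Your proposal is correct and follows the paper's route: the translation $\tau_{p}$ via Lemma~\ref{lemma:about:the:translation}, combined with $\PSPACE$-hardness of $\K$. Your framing through satisfiability, using $\tau_{p}(\neg\phi)=\neg\tau_{p}(\phi)$, is equivalent to the paper's appeal to hardness of validity. Your explicit check that the model $M'$ is $n$-dense for every $n$, via the reflexive loops on the added worlds, fills in a step the paper leaves implicit, since its lemma is only stated for dense frames.
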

\begin{proof}
By Lemma~\ref{lemma:about:the:translation} and the fact that the validity problem in the class of all frames is $\PSPACE$-hard~\cite[Theorem~$6.50$]{Blackburn:deRijke:Venema}.
\end{proof}

\section{Tableau calculus}\label{tableaux}

As said above, tableaux are a powerful way of establishing complexity results for a wide range of logics by means of decision procedures consisting in (un)successful attempts to proving the existence of a model. In the sequel, a tableau for a set $s$ will be defined as a relational structure $(W,R,F)$ where $W$ is a non empty set of \emph{nodes}, $R$ a binary relation on $W$ rooted in $root$, and $F$ a function assigning a set of formulas with each element of $W$, this structure having additional features according to the underlying logic. It is quite straightforward to design a tableau calculus for $\Log{n}$: it suffices to mimic the semantics: use $\CCS$ for the classical part, add successor node to $w$ for each $\neg\Box\phi\in w$, ensures $\phi\in w'$ for each $(w,w')\in R$ with $\Box\phi\in F(w)$ (this is usually referred to as \emph{$\Box$-rule}), and add $n-1$ intermediary nodes $wRw_1\cdots Rw_{n-1}Rw'$ between $w$ and $w'$ whenever $(w,w')\in R$, but the result is generally infinite (infinitely many intermediary nodes need to be created). A tableau is saturated iff all possible semantical constraints are respected. 

We will not make use of the usual notion of closed tableaux (containing $\bot$), in our setting we will rather refer to existence of tableaux.

In the sequel, when no confusion may arise, we will write $x$ instead of $F(x)$, e.g.\ in $\forall \phi\in x$ instead of $\forall \phi\in F(x)$. 
\subsection{Completeness and soundness}

In proofs below, we will use an extension of $\Log{n}$-satisfiability over tableaux as follows: given a tableau $T=(W,R,F)$ we will say that $T$ is $\Log{n}$-satisfiable iff there exists a model $M'=(W',R',V')$ such that $W\subseteq W'$, $R\subseteq R'$ and for all $w\in W$, for all $\phi\in F(w)\colon M,w\models \phi$ (note that in case $W$ is a singleton, it just amounts to $\Log{n}$-satisfiability).

Given a set of formulas $s$, we consider a function $\chooseCCS$ that  non deterministically picks a set in $\CCS(s)$ if possible (otherwise $s$ is not classically consistent, and hence is unsatisfiable)\footnote{Such a function can directly be obtained from a SAT-prover.}. We begin by giving a naive (non terminating) tableau calculus for $\Log{n}$:

\setlength{\textfloatsep}{0pt}
\setlength{\floatsep}{0pt}
\begin{algorithm}
 \floatname{algorithm}{Function}
\begin{algorithmic}
\caption{Tableau for the set $s$: pick some $u$ from $\CCS(s)$ and call $\sat(W=\{root\},R=\emptyset,F=\{root\mapsto u\})$}
\Function{\sat}{$W,R,F$}:
\State $W_0=\emptyset$
\While {$W\neq W_0$ (i.e.\ $W$ has been modified)}
\State $W_0:=W$
\ForAll {$w\in W, \neg\Box\phi\in w$}
\State {add new node $w^{\neg\phi}$ to $W$}
\State {pick $u\in\CCS(\Box^-(F(w))\cup\{\neg\phi\})$}
\State {add $(w,w^{\neg\phi})$ to $R$ and set $F(w^{\neg\phi})=u$}
\EndFor
\ForAll {$(w_0,w_n)\in R$ with $(w_0,w_n)\not\in R^n$}
\State let $u_0=F(w_0)$ and $u_n=F(w_n)$
\State {add new nodes $w_1\cdots w_{n-1}$ to $W$}
\State { pick $(u_i)_{1\leq i\leq n-1}\in\CCS^{n-1}$ s.th. $\Box^-(u_{i-1})\subseteq u_i$ \\ \phantom{mm} and $u_{n-1} \subseteq u_n$ and set $F(w_i)=u_i$}
\State {add $(w_0,w_1)\cdots(w_{n-1},w_n)$ to $R$}
\EndFor
\EndWhile
\EndFunction
\end{algorithmic}
\end{algorithm}
NB: if a call to ```pick'' fails, then the function returns ``unsatisfiable''. In fact, this function is just a reformulation of standard ones, like that of \cite{Baldoni2}. Classically, the function consists in infinite application of 1) developing  all $\Diamond$-formulas and 2) creating all necessary intermediary nodes, all filled by the $\Box$-rule. This provides a non terminating algorithm which is sound and complete:  there exists a saturated tableau for $s$ iff $s$ is satisfiable.
\begin{lemma}
The above algorithm is sound and complete: it provides a saturated tableau for $u$ if and only if $u$ is satisfiable. 
\end{lemma}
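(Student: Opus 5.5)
The proof splits into the two usual directions. Throughout, I read ``saturated tableau'' as the limit structure $T=(W,R,F)$ obtained along an infinite run of \sat\ in which every ``pick'' succeeds (formally, the union of the successive stages). I also read $\Box^-(x)$ as $\{\phi\colon\Box\phi\in x\}$.

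\textbf{Soundness.} Suppose a saturated tableau $T=(W,R,F)$ for $u$ exists. I define the model $M=(W,R,V)$ with $V(p)=\{w\colon p\in F(w)\}$.

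First, $M$ is $n$-dense. The second loop is applied to every edge $(w_0,w_n)\in R$ that is not yet in $R^n$, and edges created inside that loop are handled at later stages. Fairness of the limit therefore gives $R\subseteq R^n$.

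Next comes a truth lemma, proved by induction on $\phi$: for all $w\in W$, if $\phi\in F(w)$ then $M,w\models\phi$, and if $\neg\phi\in F(w)$ then $M,w\not\models\phi$.
\begin{itemize}
\item Atoms are handled by the definition of $V$, since each $F(w)$ is a $\CCS$ (so $p$ and $\neg p$ are never both present).
\item The Boolean cases follow from the closure conditions defining $\CCS$.
\item For $\neg\Box\phi\in F(w)$, the first loop produces a successor $w^{\neg\phi}$ whose label contains $\neg\phi$, and the induction hypothesis applies to it.
\item For $\Box\phi\in F(w)$, I need $\phi\in F(w')$ for every $(w,w')\in R$. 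Edges built by the first loop satisfy $\Box^-(F(w))\subseteq F(w')$ by construction. Edges built by the second loop satisfy the constraints $\Box^-(u_{i-1})\subseteq u_i$ and $u_{n-1}\subseteq u_n$; I read the last one as $\Box^-(u_{n-1})\subseteq u_n$, which is clearly what is intended.
\end{itemize}
So the $\Box$-rule holds on every edge, and since $u=F(root)$ the truth lemma gives $M,root\models u$.

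\textbf{Completeness.} Suppose $M_0,x_0\models u$ in an $n$-dense model $M_0=(W_0,R_0,V_0)$. I guide the nondeterministic choices of \sat\ with a map $f\colon W\to W_0$, maintaining two invariants: $(w,w')\in R$ implies $(f(w),f(w'))\in R_0$, and $F(w)=\SF(F(w))\cap\{\phi\colon M_0,f(w)\models\phi\}$. Item~(\ref{Five}) of Proposition~\ref{prop-CCS} guarantees that sets of this form are genuine $\CCS$s, so each such choice is an admissible pick.
\begin{itemize}
\item Initially set $f(root)=x_0$. For the invariant to hold at the root, the run should start from $u$ itself, or more precisely from its $\CCS$-extension realised at $x_0$.
\item When the first loop treats $\neg\Box\phi\in F(w)$, the truth of $\neg\Box\phi$ at $f(w)$ yields an $R_0$-successor $y$ of $f(w)$ with $M_0,y\models\neg\phi$. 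That successor also satisfies $\Box^-(F(w))$, since $F(w)$ is true at $f(w)$. I set $f(w^{\neg\phi})=y$ and choose the label given by Item~(\ref{Five}) applied to $\Box^-(F(w))\cup\{\neg\phi\}$ at $y$.
\item When the second loop treats an edge $(w_0,w_n)$, the invariant gives $(f(w_0),f(w_n))\in R_0\subseteq R_0^n$. So there are intermediate worlds $y_1,\dots,y_{n-1}$ in $M_0$ forming an $n$-step path. I map $w_i$ to $y_i$ and choose the labels by Item~(\ref{Five}) applied to $\Box^-(u_{i-1})$ at $y_i$. The $\Box$-constraints hold because each $y_i$ is an $R_0$-successor of $y_{i-1}$, and that successor relation is exactly what makes $\Box^-(u_{i-1})$ true there.
\end{itemize}
Since every pick succeeds, the limit of this run is a saturated tableau.

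\textbf{Main obstacle.} The delicate step is the edge constraint at the end of each inserted path, $u_{n-1}\subseteq u_n$. Taken literally it is too strong, since $u_n$ was fixed before the path was built. I would resolve this by interpreting it as $\Box^-(u_{n-1})\subseteq u_n$. Under that reading, the fact that $M_0,y_{n-1}$ satisfies $\Box$-formulas whose bodies hold at $f(w_n)$ is exactly what completeness needs.

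The second subtle point is the limit argument. I must check that the label of a node is never changed once set, so that the union of the stages is well defined. I must also check that the unbounded loop processes every edge and every diamond eventually, so that the limit really is saturated. Both follow from processing nodes and edges in a fair (for instance breadth-first) order.
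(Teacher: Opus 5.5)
Your soundness direction is the paper's: read the model off the saturated tableau and prove a truth lemma. For completeness you take a genuinely different route.

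The paper keeps only the invariant that the current tableau is satisfiable in \emph{some} $n$-dense model. It then shows the intermediate labels exist by a proof-theoretic contradiction. If no $u_1,\dots,u_{n-1}$ existed, one would get $\Box^-(u_{n-1})\vdash\neg u_n$, hence $u_{n-1}\vdash\Box\neg u_n$. Chaining back with K and Necessitation gives $u_0\vdash\Box^n\neg u_n$, and the density axiom then gives $u_0\vdash\Box\neg u_n$, contradicting $\Diamond u_n$ at $w_0$.

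You instead fix one model $M_0$, steer every pick through the homomorphism $f$, and take the intermediate worlds directly from $R_0\subseteq R_0^n$. What your route buys:
\begin{itemize}
\item It is more self-contained: it uses neither the axiomatization nor the soundness of $\Box^n p\rightarrow\Box p$ on $n$-dense frames.
\item Preservation of the invariant is automatic. The paper's sketch shows that suitable $u_i$ exist but leaves implicit why the enlarged tableau is again satisfiable.
\item Your labels $\SF(\cdot)\cap\{\phi\colon M_0,y\models\phi\}$ are exactly what makes the last constraint achievable as an \emph{inclusion} $\Box^-(u_{n-1})\subseteq u_n$. The paper's contradiction is phrased only in terms of consistency of $\Box^-(u_{n-1})$ with $u_n$, and with arbitrary $\CCS$ labels the inclusion can fail.
\end{itemize}
The paper's route, in exchange, shows the density axiom at work and follows the standard consistency-preservation pattern, without committing to a single model for the whole run.

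One step you should make explicit. The label invariant at $w_n$ gives $\phi\in u_n$ for $\Box\phi\in u_{n-1}$ only if $\phi\in\SF(u_n)$. This does hold: your choices give $u_i\subseteq\SF(\Box^-(u_{i-1}))$, so $\phi\in\SF(\Box^-(u_0))$. Moreover $\Box^-(u_0)\subseteq u_n$, because the box rule already holds on the edge $(w_0,w_n)$ being processed. Hence $\phi\in\SF(u_n)$, and since $\phi$ is true at $f(w_n)$, it lies in $u_n$.
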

\begin{proof}[Sketch, see e.g.\ \cite{Baldoni2} for more details]
    Soundness is trivial: a saturated tableau directly provides a model $M=(W,R,V)$ with $w\in V(p)$ iff $p\in F(w)$. Completeness is easy: the truth lemma is straightforward: $\forall\phi\in\SF(s)\forall u\in W\colon \phi\in F(u)$ iff $M,V\models \phi$. The only non-straightforward point is in proving that given $(w_0,w_n)\in R$, there are indeed the desired $(u_i)_{1\leq i\leq n-1}$. The argument is classical: consider the induction hypothesis that \emph{at step $i$ the tableau is $\Log{n}$-satisfiable} in some model $M$: if there were no such $u_i$s this would imply that $\forall u_i\in\CCS$ (
    $1\leq i<n-1$) $\forall  u_{n-1}\in\CCS\colon \Box^-(u_{n-1})\vdash \neg u_n$ and thus $u_{n-1}\vdash \Box\neg u_n$; then, by axiom $\K$ and Necessitation since $u_{n-1}\in\CCS(u_{n-2})$, it comes: $\Box^-(u_{n-2})\vdash \Box\neg u_n$, thus $u_{n-2}\vdash \Box^2\neg u_n$, and so on until $u_0\vdash\Box^n\neg u_n$. Then by axiom of $n$-density we have $u_0\vdash\Box\neg u_n$, a contradiction since in $M$ the formulas $\Diamond\neg u_n$ should also be true at $w_0$. Hence, the desired $u_i$s exist. 
\end{proof}

\section{Windows, informally}\label{simple_windows}
Until section \ref{general_case}, let us focus on  the $2$-dense case. We begin by giving intuition on the arguments. 

By construction, this structure is a dense directed acyclic graph, rooted in $root$, but the above algorithm shows also, on the fly, that $\Log2$ tableaux are partly made of the following shapes (Fig.\ref{part_model}):
\begin{figure}[!h]
\begin{centering}
  \fbox{
  \begin{tikzpicture}[scale=0.9]
\node (u) at (2,1) {$u$};
\node (u0) at (2,0) {$u_0$};
\node (v0) at (1,-0.5)  {$v_0$};
\node (u1) at (0.75,0.25) {$u_1$};
\node (u2) at (-0.5,0.5) {$u_2$};

\draw [->,>=latex] (u2) -- (u1);
\draw [->,>=latex] (u1) -- (u0);
\foreach \n in {u0,u1,u2} \draw [->] (u)  to[bend right=10](\n);
\draw [->,>=latex](u1) -- (v0) ;
\draw [->,>=latex](v0) -- (u0) ;
\end{tikzpicture}
}
\caption{Part of the intermediary nodes between $u$ and $u_0$\\}\label{part_model}
\Description{Node $u_0$ links to $u_2,u_1,u_0$, each $u_{i+1}$ links to $u_i$, $u_1$ to $v_0$, and $v_0$ to $u_0$}
\end{centering}
\end{figure}

But there is more. Let us define the degree of an edge $(u,v)$ by $d(u,v)=d(u)$ and define $R(n)=\{(u,v)\in R\colon d(u,v)\leq n\}$, then a tableau verifies the following refinement of $2$-density $R(n)\subseteq R(n)\circ R(n-1)$ which will be the basis of our algorithm. With the nodes of Fig.\ \ref{part_model} it gives: fulfilling conditions for an $n$-edge $(u,u_0)$ consists in 1) satisfying $u_0$, 2) fulfilling conditions for an(other) $n$-edge $(u,u_1)$ and 3) for an $n-1$-edge $(u_1,v_0)$. Maybe the reader can already feel that the last part is a recursive call (that will end with $n=0$) as well as the second (that will end with empty nodes), and the first part will lead to a loop-detection argument. \\
But if we aim at breaking these three parts into independent ones, we must overcome one problem: choices made for the 2) may interfere with those for 3) which in their turn may interfere with those for 1), all because of the $\Box$-rule. The way to overcome this, is to consider long enough sequences of nodes. As shown in Fig.\ \ref{part_model1}: 

\begin{figure}[!h]
\begin{centering}
  \fbox{
  \begin{tikzpicture}[scale=0.9]
\node (u) at (2,1) {$u$};
\node (u0) at (2,0) {$u_0$};
\node (v0) at (1,-0.5)  {$v_0$};
\node (u1) at (0.75,0.25) {$u_1$};
\node (u2) at (-0.3,0.45) {$u_2$};
\node (v1) at (-0.05,-0.25)  {$v_1$};
\node (un) at (-1.4,0.68) {$u_n$};

\draw [dotted,->,>=latex] (un) -- (u1);
\draw [->,>=latex] (u1) -- (u0);
\foreach \n in {u0,u1,un} \draw [->,>=latex] (u)  to[bend right=10](\n);
\draw [->,>=latex](u1) -- (v0) ;
\draw [->,>=latex](v0) -- (u0) ;
\draw [->,>=latex](u2) -- (u1) ;
\draw [->,>=latex](v1) -- (u1) ;
\draw [->,>=latex](u2) -- (v1) ;

\draw (2,0) -- (1.5,-1); 
\draw (2,0) -- (2.5,-1); 
\draw (2.5,-1) -- (1.5,-1); 
\node (z) at (2,-1.3) {Model of $u_0$};

\end{tikzpicture}
}
\caption{Influence of intermediary nodes between $u$ and $u_0$\\}\label{part_model1}
\Description{Node $u_0$ links to $u_n,u_1,u_0$, there is a $n-1$ path between $u_n$ and $u_1$, $u_{1}$ links to $u_0$, $u_1$ to $v_0$, and $v_0$ to $u_0$}
\end{centering}
\end{figure}
If $u_n$ is far enough from $u_1$ then it will not interfere, ``far enough'' means here that $n=d(u)$ in which case boxed formulas issued from $u_n$ can only ``go'' to $u_1$ and no further. Let us temporally call such a sequence $u_n,\cdots,u_0$ a $n$-window and consider the last problem which consists in the above mentioned ``loop-detection'': 
\begin{figure}[!h]
\begin{centering}
  \fbox{
  \begin{tikzpicture}[scale=0.7]
\node (u) at (2,1) {$u$};
\node (u0) at (2,-1) {$u_0$};
\node (ui) at (0.75,-0.75) {$u_i$};
\node (uj) at (-0.5,-0.5) {$u_j$};
\node (uip) at (-2.35,-0.2) {$u_{i'}$};
\node (ujp) at (-3.6,0.1) {$u_{j'}$};

\draw [dotted,->,>=latex] (uj) -- (ui);
\draw [dotted,->,>=latex] (ujp) -- (uip);
\draw [dotted,->,>=latex] (ui) -- (u0);
\draw [dashed,->,>=latex] (uip) -- (uj);
\foreach \n in {u0,ui,uj,uip,ujp} \draw [->,>=latex] (u)  to[bend right=10](\n);
\node[draw, dotted, rounded corners,rotate=-10, fit=(ui)(uj), inner sep=-1pt, label=below:{\small }] {};
\node[draw, dotted, rounded corners,rotate=-10, fit=(uip)(ujp), inner sep=-1pt, label=below:{\small }] {};
\end{tikzpicture}
}
\caption{Part of the nodes between $u$ and $u_0$. Boxes denote identical windows.\\}\label{part_model2}
\Description{A path with a loop of sequences}
\end{centering}
\end{figure}
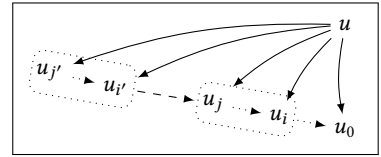
if we suppose $j-i=j'-i'\geq d(u)$ (i.e.\ windows are sufficiently long), then eventually, during the computation, there will be a repetition of windows (Fig.\ \ref{part_model2}), and precisely because they are long enough, we are sure that the dashed part can be repeated too (then the window, then the dashed part, and so on). This will be the backbone of the looping argument. Up to now, we just have taken a look back at windows as defined in \cite{BalGasq25,IGPL-Gasquet25} for simpler cases, but all those considerations must be applied also to subwindows and subsubwindows, etc. we need to generalize the notion of window. Now let us go into the details. 

\section{Recursive windows and properties}\label{windows}
All along this section, we will use $\lambda$ to denote either an integer function on a \CCS\ $u$, with $\lambda(u)\geq d(u)$ and $\lambda$ is increasing w.r.t.\ $d$ (i.e.\ $d(u)>d(v)\rightarrow \lambda(u)>\lambda(v)$), or $\lambda$ may be the constant function $\lambda(u)=\infty$ for all $u$. 

\subsection{Windows} 

Let $u$, $v_0$ be two \CCS, $k\leq d(u)$ and $n\geq d(u)$. 
A $(k,n,\lambda)$-window for $(u,v_0)$ denoted by $W$ is a pair $\pair{\mathcal N}{\mathcal W}$ with (${\mathcal N}$ for ``nodes'' and ${\mathcal W}$) for ``windows'':\\
$\bullet$ if $k=0$, $\pair{\mathcal N}{\cal W}=\voidpair$ (empty window)\\
$\bullet$ if $k>0$, $\pair{\mathcal N}{\cal W}$ is a pair of two sequences with: 
        \begin{itemize}
            \item [-]${\mathcal N}=\sequencef{v}{0}{\sizew}$ is a sequence of nodes s.th.:
        \begin{itemize}   
            \item [*]for each $i\in\intero{0}{\sizew}\colon v_i\in\CCS(\Bom(u)\cup\Bom(v_{i+1}))$
             \item [*]if $\sizew\neq \infty$ then $v_{\sizew}\in \CCS(\Bom(u))$
        \end{itemize}
        \item [-] ${\cal W}=\sequencef{W}{0}{\sizew}$ is a sequence of $(k-1,\lambda(v_{i+1}),\lambda)$-windows for $(v_i,u_i)$ and for $(u_{i+1},v_i)$ respectively, called subwindows of $W$. 
        \end{itemize}
Fig.\ \ref{part_model4} shows how it looks like at the top level. 
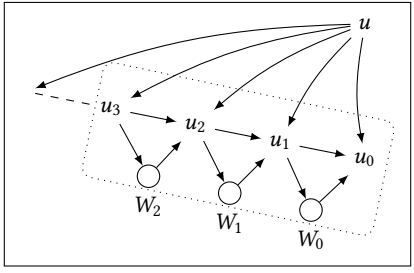
\begin{figure}[t]
\begin{centering}
  \fbox{
  \begin{tikzpicture}[scale=0.9]
\node (u) at (2,1) {$u$};
\node (u0) at (2,-1) {$u_0$};
\node (u1) at (0.75,-0.75) {$u_1$};
\node (u2) at (-0.5,-0.5) {$u_2$};
\node (u3) at (-1.75,-0.25) {$u_3$};
\node (u4) at (-3,0) {};

\node[draw,circle] (w0) at (1.2,-1.75) {};
\node[draw,circle] (w1) at (0,-1.5) {};
\node[draw,circle] (w2) at (-1.2,-1.25) {};

\draw [->,>=latex] (u2) -- (u1);
\draw [dashed,>=latex] (u4) -- (u3);
\draw [->,>=latex] (u1) -- (u0);
\draw [->,>=latex] (u3) -- (u2);
\draw [->,>=latex] (u1) -- (w0);
\draw [->,>=latex] (w0) -- (u0);
\draw [->,>=latex] (u2) -- (w1);
\draw [->,>=latex] (w1) -- (u1);
\draw [->,>=latex] (u3) -- (w2);
\draw [->,>=latex] (w2) -- (u2);
\draw (w0.south) node[below]{$W_0$};
\draw (w1.south) node[below]{$W_1$};
\draw (w2.south) node[below]{$W_2$};
\foreach \n in {u0,u1,u2,u3,u4} \draw [->,>=latex] (u)  to[bend right=10](\n);
\node[draw, dotted, rounded corners,rotate=-12, fit=(u0)(u1)(u2) (u3)(w1)(w2), inner sep=1pt, label=below:{\small }] {};
\end{tikzpicture}
}
\caption{A $(3,d(u),\lambda)$-window for $(u,u_0)$, hence of size 3. Subwindows $W_0$, $W_1$ and $W_2$ are circles. Size of subwindows is not visible, hence $\lambda$ is not specified \\}\label{part_model4}
\Description{A window composed of a sequence of nodes and a sequence of subwindows}
\end{centering}
\end{figure}

When the context is clear, we will sometimes omit the indices. Also, given a window $\langle\sequencef{v}{0}{\sizew},\sequencef{W}{0}{\sizew}\rangle$ we just need to say it is a window for $u$ (since $v_0$ may be seen from $\sequencef{v}{0}{\sizew}$). 

These sets and those of the subwindows will constitute a tableau, thus a window is a recursive structure which locally corresponds to a small part of an infinite tableau. 

The number $k$ represents the degree of  nesting of the window w.r.t.\ the main one: $k$ decreases as the nesting increases, but the reader may wonder why we also need $n$ and $\lambda$. Number $n$ can be thought of as the length of the main window, but in the course of expending a window, as we will see below, we will need to consider window whose length is different than that of its subwindows. Also, according to the case, we need subwindows of (linear) length (for the algorithm) and of (exponential) length (for completeness), thus the need of a function $\lambda$ which will essentially be either $d$ or some function $\chi$ whose definition is about to be given. 
We inductively define the set $\{W\}$ of members of a window by: 
\begin{itemize}
    \item $\{\voidpair\}=\emptyset$
    \item $\{\win{v}{W}{\sizew}\}=\{v_i\colon i\in[0:\sizew]\}\cup \bigcup_{i\in[0:\sizew[}\{W_i\}$
\end{itemize}

\noindent N.B.\ for any set $v\in \{W\}$, we have: $\lgx{v}\leq c_{\mathrm{sf}}.\lgu$ and $d(v)< d(u)$. 
\paragraph{Partial windows} 
Let $T=\win{v}{W}{\sizew}$ be a $(k,\sizew,\lambda)$-window for $(u,v_0)$ as above. Then 
$\partialwin{a}{v}{W}{b}$ (with $0\leq a<b\leq \sizew$) is a $(k,b-a+1,\lambda)$-partial window of $W$. Of course, $W$ is a partial window of $W$ (take $a=0$ and $b=\sizew$). 
\paragraph{Pointwise $k$-inclusion of partial windows} 
Let $u$, $v^1_0$ and $v^2_0$ be \CCS, let $0\leq k\leq d(u)$. Let $W^1$ be a $(k,\sizew,\lambda)$-window for $(u,v^1_0)$ and let $W^2$ be a $(k,\sizew,\lambda)$-window for $(u,v^2_0)$. Let $0\leq a<b\leq l$ and let $SW_1$ and $SW_2$ be two $(k,b-a+1,\lambda)$-partial windows of $W_1$ and $W_2$, we define the pointwise $k$-inclusion of the partial windows $SW_1\sqsubseteq^k SW_2$ by:\\
$\bullet \voidpair\sqsubseteq^{0} \voidpair$\\
$\bullet $ if $k>0\colon \partialwin{a}{v^1}{W^1}{b}\sqsubseteq^k \partialwin{a}{v^2}{W^2}{b}$ iff:
        \begin{itemize}
            \item for $i\in\intero{a}{b}\colon v^2_i\in\CCS(v^1_i\cup\Bom(v^2_{i+1}))$
            \item and for $i\in\intero{a}{b}\colon W^1_i\sqsubseteq^{k-1} W^2_i$ 
            \item [] N.B.\ if $d(u)=1$ then for $i\in\intero{a}{b}\colon d(\Bom(v^2_{i}))=0$ and hence $v^2_{i}=v^1_{i}$.
        \end{itemize}

\paragraph{Continuations of windows}
Let $u$, $v^1_0$ and $v^2_0$ be \CCS, let $0< k\leq d(u)$, and $d(u)\geq 1$. Let $W^1$ be the $(k,\sizew,\lambda)$-window for $(u,v^1_0)$ given by: $\win{v^1}{W}{\sizew}$  and $W^2$ be the $(k,\sizew,\lambda)$-window for $(u,v^2_0)$ given by: $ \win{v^2}{W}{\sizew}$. We say that $W_2$ is a $k$-continuation of $W_1$ iff 
$$\partialwin{1}{v^1}{W^1}{\sizew} \sqsubseteq^k \partialwin{0}{v^2}{W^2}{\sizew-1}$$
(beware of indexes: the ``end'' of $W_1$ is included ($\sqsubseteq^k$) in the ``beginning'' of $W_2$). 

We provide intuition of a continuation of a $(2,d(u),d)$-window in Fig.\ \ref{continuation} with $d(u)=1$ (surely $d(u)$ is ridiculously small, but this is for keeping the window small enough). Above is a window, and below one of its continuations (pointwise included by $\sqsubseteq$). As one can see, boxed formulas of nodes strictly between $\tilde v_2$ and $\tilde v_1$ cannot interfere with nodes between $v_0$ and $v_1$ (since they are of degree 0). Hence, provided the subwindow between the latter two is satisfiable, we can forget it and proceed to try to extend the window. For this, we have to test $\sqsubseteq$-inclusion of the $\bullet$-part into the $\circ$-part . The same reasoning applies at each scale on subwindows. \\

\begin{figure}[t]
\begin{centering}
  \fbox{
  \begin{tikzpicture}[scale=0.5]
\coordinate (u) at (4,8) ;
\coordinate (v0) at (16,8) ;
\coordinate[shift={(-20mm,-5mm)}] (v1) at (v0);
\coordinate[shift={(-20mm,-5mm)}] (v2) at (v1);
\coordinate (v0p) at (12,3) ;
\coordinate (v1p) at (8,2) ;
\coordinate (v2p) at (4,1) ;
\coordinate (v2b) at (4,-1) ;
\coordinate (vp112) at (8,0) ;
\coordinate (arr) at (10,3.5) ;

\coordinate (v02) at (12,6);
\coordinate (v01) at (14,7) ;
\coordinate (v12) at (8,5) ;
\coordinate (v11) at (10,6) ;
\coordinate (v001) at (15,7) ;
\coordinate (v002) at (14,6) ;
\coordinate (v012) at (12,5) ;
\coordinate (v011) at (13,6) ;
\coordinate (v101) at (11,6) ;
\coordinate (v102) at (10,5) ;
\coordinate (v111) at (9,5) ;
\coordinate (v112) at (8,4) ;

\coordinate (vp12) at (8,1) ;
\coordinate (vp11) at (10,2) ;
\coordinate (vp101) at (11,2) ;
\coordinate (vp102) at (10,1) ;
\coordinate (vp111) at (9,1) ;
\coordinate (vp112) at (8,0) ;

\draw (u) node{$\bullet$} node[above]{$u$};
\draw (v0) node{$\bullet$} node[above]{$v_0$};
\draw (v1) node{$\bullet$} node[above]{$v_1$};
\draw (v2) node{$\bullet$} node[above]{$v_2$};
\draw (v01) node{$\bullet$};
\draw (v02) node{$\bullet$};

\draw (v11) node{$\bullet$};
\draw (v12) node{$\bullet$};
\draw (v101) node{$\bullet$};
\draw (v102) node{$\bullet$};
\draw (v111) node{$\bullet$};
\draw (v112) node{$\bullet$};

\draw (vp11) node{$\circ$};
\draw (vp12) node{$\circ$};
\draw (vp101) node{$\circ$};
\draw (vp102) node{$\circ$};
\draw (vp111) node{$\circ$};
\draw (vp112) node{$\circ$};

\draw (arr) node{\rotatebox[origin=c]{-90}{$\sqsubseteq$ ?} };

\draw (v001) node{$\bullet$};
\draw (v002) node{$\bullet$};
\draw (v011) node{$\bullet$};
\draw (v012) node{$\bullet$};

\draw (v0p) node{$\circ$} node[above]{$\tilde v_0$};
\draw (v1p) node{$\circ$} node[above]{$\tilde v_1$};
\draw (v2p) node{$\circ$} node[above]{$\tilde v_2$};

\draw [dotted] (u) -- (v0);
\draw [dotted] (u) -- (v1);
\draw [dotted] (u) -- (v2);
\draw [dotted] (v2) -- (v1) -- (v0);
\draw [dotted] (v2) -- (v12);
\draw [dotted] (v1p) -- (vp12);
\draw [dotted] (v2) -- (v11);
\draw [dotted] (v1p) -- (vp11);
\draw [dotted] (v1p) -- (v0p);

\draw [dotted] (v12) -- (v112);
\draw [dotted] (vp12) -- (vp112);
\draw [dotted] (v12) -- (v111);
\draw [dotted] (vp12) -- (vp111);
\draw [dotted] (v12) -- (v11) -- (v1);
\draw [dotted] (vp12) -- (vp11) -- (v0p);
\draw [dotted] (v112) -- (v111) -- (v11);
\draw [dotted] (vp112) -- (vp111) -- (vp11);
\draw [dotted] (v11) -- (v102);
\draw [dotted] (vp11) -- (vp101);
\draw [dotted] (vp11) -- (vp102);
\draw [dotted] (v11) -- (v101);
\draw [dotted] (v102) -- (v101) -- (v1);
\draw [dotted] (vp102) -- (vp101) -- (v0p);
\draw [dotted] (v2) -- (v1) -- (v0);
\draw [dotted] (v1) -- (v02);
\draw [dotted] (v1) -- (v01);
\draw [dotted] (v02) -- (v01) -- (v0);
\draw [dotted] (v01) -- (v002);
\draw [dotted] (v01) -- (v001);
\draw [dotted] (v02) -- (v012);
\draw [dotted] (v02) -- (v011);
\draw [dotted] (v012) -- (v011) -- (v01);
\draw [dotted] (v002) -- (v001) -- (v0);

\draw [dotted] (v1p) -- (v2p) -- (v2b) -- cycle;
\end{tikzpicture}
}
\caption{A $(2,d(u),d)$-window and one of its potential continuation: arrows are left-to-right or else top-bottom (nodes between {$\tilde v_2$ and {$\tilde v_1$} are not represented).} }\label{continuation}
\Description{}
\end{centering}
\end{figure}
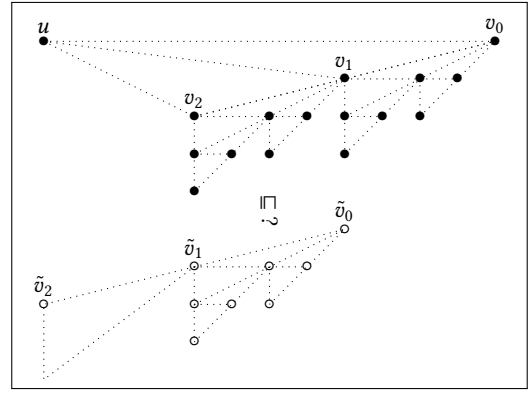
\begin{lemma}\label{k2kplusone-window}
    Let $u$, $v_0$ be two \CCS, let $0\leq k\leq d(u)$, and $d(u)\geq 1$. Let $W_1$ and $W_2$ be two $(k,\sizew,\lambda)$-windows for $u$ with:\\
    $W_1=\win{v^1}{W^1}{\sizew}$      and $W_2=\win{v^2}{W^2}{\sizew}$, and suppose $W_2$ is a $k$-continuation of $W_1$, then with
     \begin{itemize}
         \item $(z_0,D_0)=(v^1_0,W^1_0)$
         \item $z_{\sizew+1}=v^2_{\sizew}$
         \item and for $1<z\leq \sizew\colon (z_i,D_i)=(v^2_{i-1},W^2_{i-1})$
     \end{itemize}
          we have $\win{z}{D}{\sizew+1}$ is a $(k,\sizew+1,\lambda)$-window for $(u,v_0)$. 
\end{lemma}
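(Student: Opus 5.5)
The plan is to verify the defining clauses of a $(k,\sizew+1,\lambda)$-window for $(u,z_0)$ directly, one index at a time, using the continuation hypothesis
$$\partialwin{1}{v^1}{W^1}{\sizew} \sqsubseteq^k \partialwin{0}{v^2}{W^2}{\sizew-1}.$$
I read the statement's index range as ``for $1\leq i\leq\sizew$''. The case $k=0$ is trivial, since both windows are then $\voidpair$, so I assume $k>0$. The new sequence is $z_0=v^1_0$, then $z_i=v^2_{i-1}$ for $1\leq i\leq \sizew$, and finally $z_{\sizew+1}=v^2_{\sizew}$. The subwindows are $D_0=W^1_0$ and $D_i=W^2_{i-1}$. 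So the new window is the first step of $W_1$ glued in front of the whole of $W_2$.

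\textbf{Node conditions.} There are four cases to check.
\begin{enumerate}
\item For $1\leq i\leq\sizew$, we need $z_i\in\CCS(\Bom(u)\cup\Bom(z_{i+1}))$, that is $v^2_{i-1}\in\CCS(\Bom(u)\cup\Bom(v^2_i))$. This is exactly the node clause of $W_2$ at index $i-1$.
\item The last node $z_{\sizew+1}=v^2_\sizew\in\CCS(\Bom(u))$ by the terminal clause of $W_2$.
\item The only nontrivial case is $i=0$: we need $v^1_0\in\CCS(\Bom(u)\cup\Bom(v^2_0))$. From $W_1$ we have $v^1_0\in\CCS(\Bom(u)\cup\Bom(v^1_1))$. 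From the continuation at index $1$ we get $v^2_0\in\CCS(v^1_1\cup\Bom(v^2_1))$, so $v^1_1\subseteq v^2_0$. Since $\Bom$ is monotone, $\Bom(v^1_1)\subseteq\Bom(v^2_0)$.
\item Because $v^1_1\subseteq v^2_0$, the set $v^2_0$ may carry extra boxed formulas beyond those of $v^1_1$. Their $\Bom$-images must already belong to $v^1_0$. This is where the window length matters. By Proposition~\ref{prop-CCS}(\ref{Four}), $d(v^2_0\setminus v^1_1)\le d(\Bom(v^2_1))$, and along a window the degrees of the extra formulas decrease with distance from $u$. Since $\sizew\ge d(u)$, this lets me argue that no extra boxed formula of positive degree reaches index $0$. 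I expect to use the N.B.\ remark under pointwise inclusion here: when depths are exhausted, $v^2_i=v^1_i$, and in the relevant case this should give $\Bom(v^2_0)\subseteq\Bom(v^1_1)\cup\Bom(u)\subseteq v^1_0$. Once that containment holds, $v^1_0$ is a \CCS\ containing $\Bom(u)\cup\Bom(v^2_0)$, which is the required membership.
\end{enumerate}

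\textbf{Subwindow conditions.} For $i\geq1$, $D_i=W^2_{i-1}$ is already a $(k-1,\lambda(v^2_i),\lambda)$-window for the correct pair $(z_i,\cdot)$, $(z_{i+1},\cdot)$ with $z_{i+1}=v^2_i$. For $i=0$, $D_0=W^1_0$ is a $(k-1,\lambda(v^1_1),\lambda)$-window relating $v^1_0$ and $v^1_1$. It must instead relate $z_0=v^1_0$ and $z_1=v^2_0$, with length $\lambda(v^2_0)$. I plan to handle this by induction on $k$, using the inclusion $W^1_1\sqsubseteq^{k-1}W^2_0$ and proving an auxiliary monotonicity fact: a $(k-1)$-window remains a window when its upper endpoint is enlarged by $\sqsubseteq$ within the same degree. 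The length index should be unchanged because $\lambda$ depends only on degree, and $d(v^2_0)=d(v^1_1)$ because both are bounded by the same $\Bom$-data.

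\textbf{Main obstacle.} The hard step is the $i=0$ case, both for nodes and for $D_0$. It amounts to showing that the extra boxed formulas introduced in $v^2_0$ by the continuation do not propagate back to $v^1_0$. My first approach will be the depth-counting argument via Proposition~\ref{prop-CCS}(\ref{Four}) along the length-$\sizew\ge d(u)$ chain. If that argument is not tight enough, I will strengthen the hypothesis by reading $\sqsubseteq^k$ as forcing equality of boxed parts.
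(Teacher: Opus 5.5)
\textbf{The decomposition is right, but the crux is left unproved.} Your decomposition matches the paper's: everything at indices $\geq 1$ is inherited from $W_2$, and only the index-$0$ node condition and $D_0$ need work. You also correctly identify the crux. The problem is that you do not prove it.

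The inequality you cite, $d(v^2_0\setminus v^1_1)\le d(\Bom(v^2_1))$, bounds the new formulas at index $0$ by the degree of \emph{all} of $\Bom(v^2_1)$. That degree is only $\le d(u)\dotminus 2$. The bound does not feed back into itself, so once $d(u)\ge 3$ it cannot exclude boxed formulas from $v^2_0\setminus v^1_1$. Your phrase ``the degrees of the extra formulas decrease'' is exactly the statement that still needs proof.

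The missing idea is to absorb the old part at each step:
\begin{itemize}
\item The continuation gives $v^2_{i-1}\in\CCS(v^1_i\cup\Bom(v^2_i))$, and hence also $v^1_{i+1}\subseteq v^2_i$.
\item Write $v^2_i=(v^2_i\setminus v^1_{i+1})\cup v^1_{i+1}$.
\item The node condition of $W_1$ gives $\Bom(v^1_{i+1})\subseteq v^1_i$. Therefore $v^2_{i-1}\in\CCS(v^1_i\cup\Bom(v^2_i\setminus v^1_{i+1}))$.
\item Proposition~\ref{prop-CCS}(\ref{Three}),(\ref{Four}) then yields $d(v^2_{i-1}\setminus v^1_i)\le d(v^2_i\setminus v^1_{i+1})\dotminus 1$.
\end{itemize}
Run a descending induction from the top index. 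There the new formulas come from $\Bom(v^2_n)$ with $v^2_n\in\CCS(\Bom(u))$, so their degree is $<d(u)$. The induction gives $d(v^2_{i-1}\setminus v^1_i)\le d(u)+i\dotminus(n+1)$. At $i=1$ this is $d(v^2_0\setminus v^1_1)\le d(u)\dotminus n=0$, because $n\ge d(u)$; this is exactly where the window length enters. So $v^2_0$ has no boxed formula outside $v^1_1$, i.e.\ $\Bom(v^2_0)=\Bom(v^1_1)\subseteq v^1_0$, which is the index-$0$ node condition.

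Your fallbacks do not replace this argument. The N.B.\ remark concerns only $d(u)=1$. Reading $\sqsubseteq^k$ as forcing equality of boxed parts changes the lemma's hypothesis instead of proving it.

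\textbf{The $D_0$ step is misdirected.} The inclusion $W^1_1\sqsubseteq^{k-1}W^2_0$ concerns other subwindows and says nothing about $W^1_0$. The monotonicity fact you propose is false as stated. Enlarging the upper endpoint from $v^1_1$ to $v^2_0\supseteq v^1_1$ may add some $\Box\psi$, even without raising the degree, and the nodes of $W^1_0$ need not contain $\psi$.

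No induction on $k$ is needed. The clauses of a window mention its upper endpoint only through $\Bom$ of it: intermediate nodes lie in $\CCS(\Bom(\cdot)\cup\Bom(\text{next node}))$ and the last node lies in $\CCS(\Bom(\cdot))$. The deeper subwindows do not mention it at all. So once $\Bom(v^2_0)=\Bom(v^1_1)$ is established, $W^1_0$ is verbatim a window for $(v^2_0,v^1_0)$.

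The same degree bound gives $d(v^2_0)=d(v^1_1)$, which settles the length parameter when $\lambda=d$. Your claim that $\lambda$ depends only on degree is not among the paper's assumptions on $\lambda$, since $\chi$ depends on $\lgx{u}$. The paper's own proof also passes over this point.
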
 
\begin{proof}
First we need to prove the following proposition (about the non-interference of nodes too far):
\begin{proposition}\label{proposition-degree}
For $1\leq i\leq \sizew\colon d(v^2_{i-1}\setminus v^1_i) \leq d(u)+i\dotminus (\sizew+1)$, 
by descending induction on $i\in\{1,\ldots,\sizew\}$: either $i=\sizew$, or $i<\sizew$.
In the former case, $v^2_{\sizew-1}\in \CCS(v^2_{\sizew}\cup v^1_{\sizew})$.
Since $v^2_{\sizew}\in \CCS(\Bom(u))$ and $v^1_{\sizew}\in \CCS(\Bom(u))$, then $v^1_{\sizew}\leq d(u)\dotminus 1$ and $d(\Bom(v^2_{\sizew}))\leq d(u)\dotminus 2$.
Consequently, $d(v_{i-1}^2\setminus v_{\sizew}^1) \leq d(u)\dotminus 1$. In the latter case, $v_{i-1}^2\in \CCS(\Bom(v_{i}^2)\cup v_{i}^1)$; and since $v_{i}^2=v_{i}^2\cup v_{i+1}^1=(v_{i}^2\setminus v_{i+1}^1)\cup v_{i+1}^1$, and $\Bom(A\cup B)=\Bom(A)\cup \Bom(B)$, we have $v_{i-1}^2\in \CCS(\Bom(v_{i}^2\setminus v_{i+1}^1)\cup \Bom(v_{i+1}^1)\cup v_{i}^1)$; but $\Bom(v_{i+1}^1)\subseteq v_{i}^1$, hence $v_{i-1}^2\in \CCS(\Bom(v_{i}^2\setminus v_{i+1}^1)\cup v_{i}^1)$. 
By Prop.\ \ref{prop-CCS}.\ref{Three}: $\exists v'\colon v'\in\CCS(\Bom(v_{i}^2\setminus v_{i+1}^1))$ and $v_{i-1}^2=v_{i}^1\cup v'$. Thus $v_{i-1}^2\setminus v_{i}^1\subseteq v'$, and  $d(v_{i-1}^2\setminus v_{i}^1)\leq d(v')=d(\Bom(v_{i}^2\setminus v_{i+1}^1))\leq d(v_{i}^2\setminus v_{i+1}^1)\dotminus 1 \leq d(u)+(i+1)\dotminus (\sizew+1)\dotminus 1$ (by induction hypothesis) $\leq d(u)+i\dotminus (\sizew+1)$.  \end{proposition}

Now, we check that $\longwin{z}{D}{l+1}{l}$ is indeed a $(k,\sizew+1,\lambda)$-window for $(u,v_0)$ by examining the definition of continuations. \\
Since we have $z_0,z_1,z_2,\cdots,z_{\sizew+1}=v^1_0,v^2_0,v^2_1,\cdots,v^2_{\sizew}$, and also have $D_0,D_1,D_2,\cdots,D_{\sizew}=W^1_0,W^2_0,W^2_1,\cdots,W^2_{\sizew-1}$, it comes:
\begin{enumerate}
    \item $v_{\sizew}^2 \in \CCS(\Bom(u))$
    \item 
    \begin{enumerate}
    \item $v_{\sizew-1}^2\in\CCS(v_{\sizew}^1\cup\Bom(v_{\sizew}^2))$, and since $v_{\sizew}^1\in\CCS(\Bom(u))$, it comes $v_{\sizew-1}^2\in\CCS(\Bom(u)\cup\Bom(v_{\sizew}^2))$;
    \item take $i\in[0:\sizew-2]$. Then $v_{i-1}^2\in\CCS(v_i^1\cup\Bom(v_i^2))$, and since $v_i^1\in\CCS(\Bom(u)\cup\Bom(v_{i+1}^1)$, by Prop.\ \ref{prop-CCS}.\ref{One}: $v_{i-1}^2\in\CCS(\Bom(u)\cup\Bom(v_i^2)\cup\Bom(v_{i+1}^1))$. But $v_{i+1}^1\inc v_i^2$, hence $v_{i-1}^2\in\CCS(\Bom(u)\cup\Bom(v_i^2))$; 
    \item to conclude for condition 2, it remains to prove that $v_0^1\in \CCS(\Bom(v_0^2)\cup \Bom(u))$. By Prop.\ \ref{proposition-degree}, $d(v_0^2\setminus v_1^1)\leq d(u)\dotminus\sizew\leq 0$, hence if $\Box \phi\in v_0^2$ then $\Box \phi\in v_1^1$ and thus $\Bom(v_0^2)=\Bom(v_1^1)$.\\ Since $v_0^1\in \CCS(\Bom(v_1^1)\cup \Bom(u))$, then $v_0^1\in \CCS(\Bom(v_0^2)\cup \Bom(u))$.
    \end{enumerate}
    \item We verify condition 3 by proving that for each $i\in\intero{0}{\sizew}$ $W^2_i$ is a $(k+1,\lambda(v^2_{i+1}),\lambda)$-window for $(v^2_{i+1},v^2_i)$ and that $W^1_0$ is a $(k+1,\lambda(v^2_0),\lambda)$-window for $(v^2_0,v^1_0)$. It is immediate for the first ones. 
    Concerning $W^1_0$: if $k=1$ and $W^1_0=\voidpair$ then we are done, else let $W^1_0=\langle(v^{0,1}_i),(W^{0,1}_i)\rangle$ (we omit the ranges of the sequences) ; since it is a $(k+1,\lambda(v^1_1),\lambda)$-window for $(v^1_1,v^1_0)$ we have, for all $0\leq i< \lambda(v^1_1)\colon v^{0,1}_i\in\CCS(\Bom(v^{0,1}_{i+1})\cup\Bom(v^1_1))$, but recall from 3.c above that $\Bom(v^2_0)\inc\Bom(v^1_1)$, and since each $W^{0,1}_i$ is a $(k+2,\lambda(v^{0,1}_{i+1}),\lambda)$-window for $(v^{0,1}_{i+1},v^{0,1}_i)$, hence conditions are met to state that $W^1_0$ is a $(k+1,\lambda(v^2_0),\lambda)$-window for $(v^2_0,v^1_0)$. 
    \end{enumerate}
    
\end{proof}
\begin{lemma}\label{prolongation-to-infinite-window}
    Let $u$, $v_0$ be two \CCS, let $0\leq k\leq d(u)$, if there exists a $(k,\chi(u),d)$-window for $(u,v_0)$ for a ``sufficiently large'' $\chi(u)$ which depends on $u$, then there exists a $(k,\infty,d)$-window for $(u,v_0)$. Such a window will be called ``maximal''. 
\end{lemma}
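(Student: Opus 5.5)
The plan is a pigeonhole/looping argument combined with Lemma~\ref{k2kplusone-window}, carried out by induction on $k$. The case $k=0$ is trivial, since the empty window is also the $(0,\infty,d)$-window. For $k>0$, a $(k,n,d)$-window for $(u,v_0)$ is a sequence of nodes $v_0,\dots,v_n$, all in $\CCS$ built from subformulas of $u$, together with subwindows. Each subwindow is a $(k-1,d(v_{i+1}),d)$-window, so its length is bounded by $d(u)$. Consequently every partial window of length $d(u)+1$ (the ``$d(u)$-long slices'') belongs to a finite set whose size is bounded by a computable function of $u$. I would take $\chi(u)$ to be one plus the number of such slices, which is roughly $\big(2^{c\lgu}\big)^{f(d(u))}$ for a suitable tower-like $f$ counting nested subwindows.

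The first step is to find a repetition. Given a $(k,\chi(u),d)$-window, I will cut it into consecutive slices of length $d(u)+1$. By pigeonhole, two slices $S=\partialwin{i}{v}{W}{i+d(u)}$ and $S'=\partialwin{j}{v}{W}{j+d(u)}$ with $i<j$ must coincide. The dashed part between them, corresponding to indices in $\interf{i+d(u)}{j}$, can then be ``pumped'': the segment from $i$ to $j+d(u)$ starts and ends with the same slice. I will regard this segment as producing a continuation in the sense of the paper. The end of one copy is equal to the beginning of the next, so it is in particular $\sqsubseteq^k$-included, with $v^2_i=v^1_i$ trivially in $\CCS(v^1_i\cup\Bom(v^2_{i+1}))$. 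I will reuse the window condition $v_i\in\CCS(\Bom(u)\cup\Bom(v_{i+1}))$ here.

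The second step is to glue indefinitely. I will apply Lemma~\ref{k2kplusone-window} repeatedly, each application shifting by one node and appending the next node and subwindow of the periodic sequence. Thanks to Proposition~\ref{proposition-degree}, the non-interference is ensured: beyond distance $d(u)$, boxed formulas coming from far nodes have degree $0$, so $\Bom$ of the glued node agrees with that of the original. Iterating, the prefix $v_0,\dots,v_{j+d(u)}$ is followed by infinitely many copies of the loop $v_{i+d(u)+1},\dots,v_{j+d(u)}$, with the corresponding subwindows. Every finite prefix is a $(k,m,d)$-window, and the limit object $\infwin{v}{W}$ satisfies the local conditions (only the clause on $v_n$ for finite $n$ is dropped). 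The result is therefore a $(k,\infty,d)$-window. Subwindows need not be made infinite, since they are $(k-1,d(\cdot),d)$-windows with finite length parameter.

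The main obstacle I expect is the precise bookkeeping. Here is what has to be checked. The subwindow attached to the junction node, namely $W_{j+d(u)-1}$ followed by the copied $v_{i+d(u)}$, must still be a valid subwindow for $(v_{i+d(u)+1}, v_{i+d(u)})$ after re-indexing. Similarly, the condition $v_{i+d(u)}\in\CCS(\Bom(u)\cup\Bom(v'))$ must hold where $v'$ is now the node coming after the copy. For this I would exploit slice equality: the nodes preceding the junction in both occurrences are identical on a span of length $d(u)$, so the same argument as case 3.c of Lemma~\ref{k2kplusone-window} shows that the relevant $\Bom$-sets coincide. I would also double-check that $\chi(u)$ is taken large enough to guarantee the slice length $\geq d(u)$, which is exactly what Proposition~\ref{proposition-degree} needs.
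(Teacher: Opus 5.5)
Your proposal is correct and is essentially the paper's argument: pigeonhole over the finitely many $(k,d(u),d)$-slices, then periodic repetition between two equal slices. Because the copies are exact, the appeals to Lemma~\ref{k2kplusone-window} and Proposition~\ref{proposition-degree} are unnecessary: thanks to $v_{j+d(u)}=v_{i+d(u)}$, every pair of consecutive nodes in the pumped sequence, together with its subwindow, already occurs in the original window.

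The only slip is quantitative. If $N$ is the number of possible slices, pigeonhole needs $N+1$ slices. So $\chi(u)$ must be about $N+d(u)$ with sliding slices (the paper's $2^{P(\lgu)}+d(u)$), or about $(N+1)(d(u)+1)$ with disjoint ones, not $N+1$.
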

\begin{proof}
    In order to precise ``sufficiently large", let us first compute the number of $\CCS$ in a $(k,d(u),d)$-window which is either $W=\win{v}{W}{d(u)}$ or $W=\voidpair$ according to $k$. This number is bounded above by the following recurrent inequalities: 
    $$\begin{array}{ll}
         s(k) &  = (\mbox{if } n=0 \mbox{ then } 1 \mbox{ else }{d(u)}+\Sigma_{i=0}^{d(u)-1}s(k-1)\\
         & \leq (\mbox{if } k=0 \mbox{ then } 1 \mbox{ else } d(u) + d(u).s(k-1)\\
         &\leq d(u) + d^2(u)+ d^2.s(k-1)\\
         &\leq d(u) + d^2(u)+ \cdots +d^k(u)+ d^k.s(0)\\
         &\leq Q(d(u)) \hfill\mbox{\indent for some polynomial $Q$ of degree $k$}\\
         &\leq Q(\lgu)\\
    \end{array}$$

    Each \CCS\ is a member of $\SF(u)$ and there are at most $2^{\lgx{\SF(u)}}=2^{c_{\mathrm{sf}}.\lgu}$ of them. Hence, there are at most $(2^{c_{\mathrm{sf}}.\lgu})^{Q(\lgu)}$ distinct $(k,d(u),d))$-windows for $(u,v_0)$, i.e.\ $2^{P(\lgu)}$ for some polynomial $P$ of degree $k+1$. \\
We claim that $\chi(u)=2^{P(\lgu)}+d(u)$: let $W$ be a $(k,\uplim{u},d(u))$-window for $(u,v_0)$, it can be broken into the sequence $(W_j)_{j\in[0\:\uplim{u}]}$ of $(k,d(u),d))$-partial windows for $(u,v_0)$ each of them being a $k$-continuation of the previous. Then, because of the above bound, at least two of them are identical: there exists integers $h,\delta$ such that $\delta\neq 0$ and $h+\delta \leq \uplim{u}$ and $W_h=W_{h+\delta}$. Let $(\Wpp_j)_{0\leq j}$ be the infinite sequence such that for all $j\leq h$, $\Wpp_j=W_j$ and for all $j>h$, $\Wpp_j=W_{h+((j-h)\!\!\!\mod \delta)}$. By construction, for all $j\geq 0$, $\Wpp_{j+1}$ is a $k$-continuation of $\Wpp_j$. For all $j\geq 0$, suppose that $\Wpp_j=\win{v^j}{W^j}{d(u)}$, and set $W=\langle (v^i_0)_{0\leq i}, (W^i_0)_{0\leq i}\rangle$ which is a $(k,\infty,d)$-window for $(u,v_0)$. 
    \end{proof}

\begin{lemma}\label{from-model-to-window}
    Let $M=(S,R,V)$ be a $2$-dense model. Let $u,v_0$ be two \CCS\ and suppose that there exists $x,y_0\in S$ such that: $(x,y_0)\in R$ and $M,x\models u$ and $M,y_0\models v_0$. Then for any integer $k$, there exists $W$ a $(k,\chi(u),\chi)$-window and all its $\CCS$ are $\Log2$-satisfiable (i.e.\ for $(u,v_0)$ such that for all $v\in W\colon v$ is $\Log2$-satisfiable). 
\end{lemma}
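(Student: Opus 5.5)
We build the window directly from the model by induction on $k$, reading every node off a world of $M$ and taking as its \CCS\ the set of formulas true there, as in Prop.~\ref{prop-CCS}.\ref{Five}. The statement is to be proved uniformly for all \CCS\ $u,v_0$ and all pairs $(x,y_0)\in R$ with $M,x\models u$ and $M,y_0\models v_0$. Satisfiability of every member is then automatic, since each member is true at some world of $M$.

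For $k=0$ the empty window $\voidpair$ works, with no members. For $k>0$, start from $y_0$ and use $2$-density ($R\subseteq R^2$) on $(x,y_0)$ to get $y_1$ with $(x,y_1)\in R$ and $(y_1,y_0)\in R$. Apply $2$-density again to $(x,y_1)$ to get $y_2$ with $(x,y_2),(y_2,y_1)\in R$, and so on. This gives a sequence $y_0,y_1,\ldots,y_{\chi(u)}$ such that each $y_i$ is an $R$-successor of $x$ and $(y_{i+1},y_i)\in R$.

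Set $v_0$ as given and, for $i\geq 1$, let $v_i=\SF(\Bom(u)\cup\Bom(v_{i+1}))\cap\{\phi\colon M,y_i\models\phi\}$. This must be done top-down from $i=\chi(u)$, where $v_{\chi(u)}$ is built from $\Bom(u)$ only; the result lies in $\CCS(\Bom(u))$. For $i<\chi(u)$, truth of $u$ at $x$ and of $v_{i+1}$ at $y_{i+1}$, together with $xRy_i$ and $y_{i+1}Ry_i$, gives $M,y_i\models\Bom(u)\cup\Bom(v_{i+1})$. Prop.~\ref{prop-CCS}.\ref{Five}, restricted to these subformulas, then gives $v_i\in\CCS(\Bom(u)\cup\Bom(v_{i+1}))$.

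One point needs care at $i=0$: $v_0$ is prescribed, so we need $v_0\in\CCS(\Bom(u)\cup\Bom(v_1))$. Here I would either weaken the requirement to $v_0$ being true at $y_0$, or replace $v_0$ by its saturation $v_0\cup\{$true subformulas of $\Bom(v_1)\}$. The paper's convention seems to allow the latter, so I will note this as a harmless adjustment. The degree bound $d(v_i)<d(u)$ follows as in the N.B.\ after the definition of members.

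For the subwindows, the $i$-th subwindow must be a $(k-1,\chi(v_{i+1}),\chi)$-window for the edge $(y_{i+1},y_i)\in R$. Since $M,y_{i+1}\models v_{i+1}$ and $M,y_i\models v_i$, the induction hypothesis on $k-1$ applied to the pair $(v_{i+1},v_i)$ with worlds $(y_{i+1},y_i)$ produces it, with all its \CCS\ satisfiable. The induction is well founded on $k$; the requirement $k\le d(v_{i+1})$ holds since degrees strictly decrease along nesting, and if it fails we take empty windows. Because the construction is uniform, no length constraint is violated: $\chi$ is only an index bound, and $2$-density lets us descend arbitrarily far.

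The main obstacle I expect is bookkeeping rather than mathematics. The first part is the choice of nodes as exactly the true subformulas of the right sets, so that the $\CCS$ membership conditions hold literally rather than up to supersets. The second is the boundary adjustment of $v_0$ discussed above.
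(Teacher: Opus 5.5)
Your proposal is essentially the paper's proof. Both argue by induction on $k$, build a chain $y_0,y_1,\ldots$ by iterating $2$-density, read the nodes off as truth sets, and apply the induction hypothesis to $(v_{i+1},v_i)$ at $(y_{i+1},y_i)$; in fact your top-down sets coincide with the paper's $v_i=\SF(\Bom(u))\cap y_i$, because $v_{i+1}\subseteq\SF(\Bom(u))$ forces $\SF(\Bom(v_{i+1}))\subseteq\SF(\Bom(u))$.

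The problem you flag at $i=0$ is genuine, and the paper skips it silently: it asserts $v_i\in\CCS(\Bom(u)\cup\Bom(v_{i+1}))$ for all $i\geq 0$ although $v_0$ is prescribed. Your adjustment is needed only at the top level. In the recursive calls the prescribed node $v_i$ is already the set of formulas of $\SF(\Bom(u))$ true at $y_i$, so it already contains $\Bom(v_{i+1})$ and $\Bom(w_1)$ for the subwindow's first node $w_1$.
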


\begin{proof}
By induction on $k$:
\begin{itemize}
    \item If $k=0$, we are done with $\voidpair$. 
    \item If $k>0$, since $(x,y_0)\in R$ and $M$ is $2$-dense, let $(y_i)_{i\geq 0}$ be the $(k,\infty)$-sequence for $(x,y_0)$ in $(S,R)$.\\
    For each $i\geq 1$ let $v_i=\SF(\Bom(u))\cap y_i$. Trivially $v_i$ is $\Log2$-satisfiable. \\
    First, we establish the following fact: for all $i\geq 0$, $v_i\in\CCS(\Bom(u)\cup\Bom(v_{i+1}))$: 
    \begin{enumerate}
        \item since $M,x\models u$ and $(x,y_i)\in\Ri{k}$, then $M,y_i\models \Bom(u)$, hence $\Bom(u)\inc \SF(\Bom(u))\cap y_i$, thus $\Bom(u)\inc v_i$, 2) let $\phi\in\Bom(v_{i+1})$, then $\Box\phi\in v_{i+1}$, hence $\Box\phi\in\SF(\Bom(u))\cap y_{i+1}$, thus $\phi\in\SF(\Bom(u))\cap y_{i}$, i.e.\ $\phi\in v_i$; finally, $\Bom(u)\cup\Bom(v_{i+1})\inc v_i$; 
        \item $v_i$ is saturated; we only consider the $\wedge$ case: let $(\phi\wedge\psi)\in s_i$, hence $(\phi\wedge\psi)\in\SF(\Bom(u))\cap y_i$, hence $\phi\in\SF(\Bom(u))\cap y_i$ (as well for $\psi$), hence $\phi\in y_i$ (idem for $\psi$); 
        \item being a finite subset of the consistent set $y_i$, $v_i$ is consistent. 
    \end{enumerate}
    1.\ to 3.\ together prove the fact. \\
    Thus, for each $i\geq 0$, both $v_{i+1}$ and $v_i$ are \CCS\ such that there exists $y_{i+1}$ and $y_i$ with $(y_{i+1},y_i)\in\Ri{k+1}$ and $M,y_{i+1}\models v_{i+1}$ and $M,y_i\models v_i$, hence induction hypothesis applies: there exists a $(k-1,\uplim{v_{i+1}},\chi)$-window for $(v_{i+1},v_i)$ with all its $\CCS$ $\Log2$-satisfiable. Let us denote it $W_i$. Finally, and since by hypothesis $v_0$ is $\Log2$-satisfiable, $\win{v}{W}{\sizew}$ is the desired $(k,\uplim{u},\chi)$-window for $(u,v_0)$. 
\end{itemize}
\end{proof}

\begin{corollary}\label{existence-of-window}
Let $u$ a \CCS\ containing some formula $\neg \Box\phi$ and satisfied at a world $x$ of a $2$-dense model $M$ then a) there exists $v_0\in\CCS(\{\neg\phi\}\cup\Bom(u))$ and b) there exists a $(d(u),\uplim{u},\chi)$-window for $(u,v_0)$ and all its $\CCS$ are $\Log2$-satisfiable.
\end{corollary}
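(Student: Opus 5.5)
The plan is to derive the corollary directly from Lemma~\ref{from-model-to-window}, once a suitable successor world has been fixed in $M$.

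\emph{Step 1 (part a).} Since $M,x\models u$ and $\neg\Box\phi\in u$, the semantics give a world $y_0$ with $(x,y_0)\in R$ and $M,y_0\models\neg\phi$. Because every $\Box\psi\in u$ is true at $x$, we also have $M,y_0\models\Bom(u)$. So $\{\neg\phi\}\cup\Bom(u)$ is true at $y_0$. I would then take
$$v_0=\SF(\{\neg\phi\}\cup\Bom(u))\cap\{\psi\colon M,y_0\models\psi\}.$$
By Prop.~\ref{prop-CCS}.\ref{Five}, applied to a \CCS\ of $\{\neg\phi\}\cup\Bom(u)$ realised at $y_0$, this set is a \CCS\ containing $\{\neg\phi\}\cup\Bom(u)$. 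The closure checks are the same routine ones already done in the proof of Lemma~\ref{from-model-to-window}: saturation under $\wedge$, $\neg\wedge$ and $\neg\neg$ comes from truth at $y_0$ together with closure of $\SF$, and consistency holds because no formula and its negation are both true at $y_0$. Hence $v_0\in\CCS(\{\neg\phi\}\cup\Bom(u))$, and $v_0$ is $\Log2$-satisfiable, witnessed by $y_0$.

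\emph{Step 2 (part b).} Now $x$ and $y_0$ satisfy exactly the hypotheses of Lemma~\ref{from-model-to-window}: $(x,y_0)\in R$, $M,x\models u$ and $M,y_0\models v_0$. Instantiating that lemma with $k=d(u)$ gives a $(d(u),\uplim{u},\chi)$-window for $(u,v_0)$ all of whose \CCS\ are $\Log2$-satisfiable. The side condition $k\le d(u)$ in the definition of windows holds trivially. Note that $d(u)\ge1$ here, since $u$ contains $\neg\Box\phi$, so the window is non-empty.

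\emph{Main obstacle.} The only delicate point is consistency of the two parts. The lemma builds its window for the specific pair $(x,y_0)$, so I must make sure it is applied to the same $v_0$ produced in part a), and that this $v_0$ satisfies whatever membership convention the lemma uses. In its proof the lemma defines the later nodes as $v_i=\SF(\Bom(u))\cap y_i$ but takes $v_0$ as given. I would therefore check that the fact ``$v_0\in\CCS(\Bom(u)\cup\Bom(v_1))$'' still holds for our richer $v_0$, which also contains $\neg\phi$. It does: the argument in the lemma's proof only needs $\Bom(u)\subseteq v_0$, plus the fact that $\Box\chi\in v_1$ is true at $y_1$ with $(y_1,y_0)\in R$, so $\chi$ holds at $y_0$. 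To make $\chi\in v_0$ literally true, I would enlarge $v_0$ to $\SF(\{\neg\phi\}\cup\Bom(u))\cap\{\psi\colon M,y_0\models\psi\}$. This suffices because $\chi\in\SF(\Bom(u))$ and this set is contained in the $\SF$ set used for $v_0$.
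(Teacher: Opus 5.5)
Your proposal is correct and is essentially the paper's proof: choose a witness $y_0$ for $\neg\Box\phi$, let $v_0$ be the relevant subformulas true at $y_0$, and apply Lemma~\ref{from-model-to-window} with $k=d(u)$. Your base set $\SF(\{\neg\phi\}\cup\Bom(u))$ is more careful than the paper's $\SF(u)$, because it guarantees both $\neg\phi\in v_0$ and $d(v_0)<d(u)$, and your explicit check of the $i=0$ window condition fills in a step that the paper's lemma leaves implicit (one caveat, which Prop.~\ref{prop-CCS}.\ref{Five} shares: the $\neg\wedge$ saturation check really needs $\SF$ to be closed under single negations).
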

\begin{proof}
    Since $M,x\models u$ then $M,x\models \neg\Box\phi\wedge \Box(u)$, hence there exists $y_0$: $(x,y_0)\in R$ and $M,y_0\models \neg\phi\wedge \Bom(u)$. Let $v_0=\SF(u)\cap y_0$ and conclude with the above lemma with $k=d(u)$. 
\end{proof}
\section{The algorithm}\label{algorithm}

The idea is that despite the infinity of $2$-dense models, and because of lemma \ref{prolongation-to-infinite-window}, it would suffices to check $(k,\uplim{u},\chi)$-windows. But they are of unparameterized exponential size, so we need to check them by exploring relatively small pieces at a time and this will appear to be recursively possible, thanks to continuations.

The algorithm we present below is based on the function $\sat$ which answers to the $\Log2$-satisfiability of its argument. 
Because of Prop. \ref{prop-CCS}.\ref{Five}, the $\Log2$-satisfiability of a set $s$ of formulas amounts to that of at least one of its $\CCS$, since $s$ is $\Log2$-satisfiable if and only if there exists a $\Log2$-satisfiable $u\in\CCS(s)$. Hence, given an initial set of formulas $s$ to be tested, the initial call is $\sat(\chooseCCS(\{s\}))$.\\
In what follows we use built-in functions \algand\ and \all.
The former function lazily implements a logical ``and".
The latter function lazily tests if all members of its list argument are true. 
\setlength{\textfloatsep}{0pt}
\setlength{\floatsep}{0pt}
\begin{algorithm}
 \floatname{algorithm}{Function}
\begin{algorithmic}
\caption{Test for $\Log2$-satisfiability of a \CCS: it must be classically consistent and recursively so for each $\Diamond$-formula subsequent window.}
\Function{\sat}{$u$}:
\State {return}
\State {\hspace{0.3cm}$u\neq \{\bot\}$}
\State {\hspace{0.1cm}$\algand$}
\State {\hspace{0.2cm}$\all \{\satW(\chooseW(u,\chooseCCS(\{\neg\phi\}\cup\Bom u,d(u)), u,d(u),\uplim{u})$}
\State {\hspace{1cm}$\colon\neg\Box\phi\in u\}$}
\EndFunction
\end{algorithmic}
\end{algorithm}

\begin{algorithm}
\floatname{algorithm}{Function}
\begin{algorithmic}
\caption{Returns $\{\bot\}$ if $s$ is not classically consistent, otherwise returns a $\CCS$, non-deterministically chosen}
\Function{\chooseCCS}{$s$}
\If  {$\CCS(s)\neq \emptyset$}
\State {return one $u\in \CCS(s)$}
\Else 
\State {return $\{\bot\}$}
\EndIf
\EndFunction
\end{algorithmic}
\end{algorithm}
\hspace{0.3cm}

\begin{algorithm}
\floatname{algorithm}{Function}
\begin{algorithmic}
\caption{Non-deterministically picks a $(k,d(u),d)$-window for $(u,v)$}
\Function{\chooseW}{$u$,$v$,$k$}
\If {there exists a $(k,d(u),d)$-window $W$ for $(u,v)$}
\State {return $W$} \algorithmiccomment{with $W=\win{v}{W}{d(u)}$}
\State {} \algorithmiccomment{or $\voidpair$ if $k=0$}
\Else 
\State {return $\langle (\{\bot\})_{i\in[0:d(u)]}, \emptyset\rangle $}
\EndIf
\EndFunction
\end{algorithmic}
\end{algorithm}
\hspace{0.3cm}

\begin{algorithm}
\floatname{algorithm}{Function}
\begin{algorithmic}
\caption{Tests the satisfiability of a $(k,d(u),d)$-window for $(u,v_0)$ and recursively that of each of its subwindows and continuations, until a repetition happens or a contradiction is detected}
\Function{\satW}{$W$,$u$,$k$,$N$}:\algorithmiccomment{$W$ is $\langle(v_i),(W_i)\rangle$ if $k>0$}
\State {} \algorithmiccomment{or $\voidpair$ if $k=0$}
\State {}   \algorithmiccomment{or $\langle (\{\bot\}), \emptyset\rangle$ if there were no possible $W$}
\If {$N=0$ \textbf{or} $d(u)=1$} 
\State {return \true} 
\Else 
\State {return}
\State {\hspace{0.83cm} \sat$(v_0)$}
\State {\hspace{0.2cm} $\algand\ \satW(W_0,v_{1},k-1,\uplim{v_1})$}\
\State {\hspace{0.2cm} \algand\ \satW(\nextW$(W,u,k),u,k,N-1))$}
\EndIf
\EndFunction
\end{algorithmic}
\end{algorithm}
\hspace{0.3cm}
\begin{algorithm}
\floatname{algorithm}{Function}
\begin{algorithmic}
\caption{Non-deterministically chooses a $k$-continuation of a window for $s$}
\Function{\nextW}{$W$,$u$}
\If {there exists a $k$-continuation $W_1$ of $W$ for $u$}
\State {return $W_1$}
\Else 
\State {return $\langle (\{\bot\})_{0\leq i\leq d(u)}, \emptyset\rangle $}
\EndIf
\EndFunction
\end{algorithmic}
\end{algorithm}

\section{Analysis of the algorithm}\label{analysis}

\begin{proposition}\label{proposition-on-members}
\mbox{}\\
\begin{enumerate}
    \item For all $j\in\interf{0}{\uplim{u}}$ we have $\{W\lbr  j\rbr \}\subseteq \{W\}$. 
    \item Given an initial call $\satW(W'\lbr  0\rbr ,u',k',\chi(u'))$, then in all subsequent calls $\satW(W\lbr  j\rbr ,u,k,N)$ the precondition $\{W\}\subseteq \{W'\}$ is satisfied. 
\end{enumerate}
\end{proposition}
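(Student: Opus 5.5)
The plan is to fix the reading of the notation and then argue by induction on the depth of the recursive calls. As in the proof of Lemma~\ref{prolongation-to-infinite-window}, I read $W\lbr j\rbr$ as the $j$-th $(k,d(u),d)$-partial window of a long window $W$ for $(u,v_0)$. This is the window obtained after $j$ successive $k$-continuations, i.e.\ the $j$-th argument passed to $\satW$ through $\nextW$. Correspondingly, $W'$ in item~2 is read as the long window assembled from the whole sequence $W'\lbr 0\rbr, W'\lbr 1\rbr,\ldots$ by repeated use of Lemma~\ref{k2kplusone-window}.

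For item~1, I would unfold the inductive definition of $\{\cdot\}$. A partial window $\partialwin{a}{v}{W}{b}$ consists of some of the nodes $v_i$ of $W$ together with some of its subwindows $W_i$, with $a\le i\le b$. By the clause $\{\win{v}{W}{\sizew}\}=\{v_i\}\cup\bigcup\{W_i\}$, each of these contributes members that are already members of $W$. Hence $\{W\lbr j\rbr\}\subseteq\{W\}$ follows directly from the definition, with no induction beyond the one built into $\{\cdot\}$.

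For item~2, I would induct on the call tree rooted at $\satW(W'\lbr 0\rbr,u',k',\chi(u'))$. The base case is immediate: $\{W'\lbr 0\rbr\}\subseteq\{W'\}$ by item~1. For the inductive step, take a call $\satW(W\lbr j\rbr,u,k,N)$ satisfying $\{W\lbr j\rbr\}\subseteq\{W'\}$ and look at the recursive calls it makes.
\begin{enumerate}
\item The call $\sat(v_0)$ issues no $\satW$ with the same root. Any $\satW$ it launches is a fresh initial call, which falls outside the statement.
\item The subwindow call $\satW(W_0,v_1,k-1,\chi(v_1))$ has $W_0$ as a subwindow of $W\lbr j\rbr$. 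The definition of $\{\cdot\}$ gives $\{W_0\}\subseteq\{W\lbr j\rbr\}\subseteq\{W'\}$. The further continuations of $W_0$ are handled by the same argument one level down, taking $W_0$ itself as the long window.
\item The continuation call on $\nextW(W\lbr j\rbr,u,k)=W\lbr j+1\rbr$: by Lemma~\ref{k2kplusone-window}, gluing $W\lbr j\rbr$ with its continuation yields a $(k,d(u)+1,\lambda)$-window whose node and subwindow sequences are those of $W\lbr j\rbr$ extended by one. This glued window is exactly the corresponding partial window of the long window $W'$, so item~1 applied to it gives $\{W\lbr j+1\rbr\}\subseteq\{W'\}$.
\end{enumerate}

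The main obstacle is the continuation step. There, $W\lbr j+1\rbr$ is produced non-deterministically, so it is not literally a piece of a window fixed in advance. The inclusion only holds because $W'$ is defined a posteriori as the glued sequence. I would therefore make precise that $\{W'\}$ denotes the members of the window built from the actual run, via Lemma~\ref{k2kplusone-window}, and check the index shift in that lemma: the $z_i$ and $D_i$ there coincide with the entries of $W\lbr j\rbr$ and $W\lbr j+1\rbr$. Once this identification is fixed, the rest is bookkeeping.
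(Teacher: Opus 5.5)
Your proof has a genuine gap, and it comes from how you read $W\lbr j\rbr$ and $W'$.

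\textbf{How the paper sets things up.} In the paper, $W'$ is given before the run. In Lemma~\ref{soundness} it is the $(k',\uplim{u'},\chi)$-window from Corollary~\ref{existence-of-window}, whose subwindows are themselves exponentially long. The run is made to follow $W'$: \chooseW{} returns $W'\lbr 0\rbr$ and $\nextW(W\lbr j\rbr,u,k)$ returns $W\lbr j+1\rbr$.

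Here $W\lbr j\rbr$ is the \emph{recursive} shrink $\langle (v_i)_{i\in[j:j+d(u)]},(W_i\lbr 0\rbr)_{i\in[j:j+d(u)[}\rangle$, not a plain partial window. Its subwindows are the initial shrinks $W_i\lbr 0\rbr$ of the long subwindows $W_i$. That is what makes it a $(k,d(u),d)$-window, the only kind \satW{} handles; a plain partial window of $W'$ would keep $\chi$-long subwindows. Consequently item~1 does not follow directly from the definition of $\{\cdot\}$. You need $\{W_i\lbr 0\rbr\}\subseteq\{W_i\}$, which is item~1 itself one nesting level down, so an induction on $k$ is required.

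\textbf{The failure in item~2.} You carry the invariant on the short window, $\{W\lbr j\rbr\}\subseteq\{W'\}$, and define $W'$ a posteriori by gluing the top-level continuations with Lemma~\ref{k2kplusone-window}. This covers the top-level continuation calls, but only by construction, and it breaks at the subwindow level.

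The call $\satW(W_0,v_1,k-1,\uplim{v_1})$ is followed by $\uplim{v_1}$ further continuations of $W_0$ chosen by \nextW. Their new nodes lie neither in $W_0$, which has length only $d(v_1)$, nor in your glued $W'$, whose subwindows are just the short ones chosen at top level. So ``taking $W_0$ itself as the long window'' gives no inclusion into $\{W'\}$.

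An a posteriori $W'$ would also make the proposition useless for Lemma~\ref{soundness}. The whole point there is that every node visited belongs to the model-derived $W'$ and is therefore satisfiable, so that $\sat(v_0)$ succeeds by induction.

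\textbf{The missing idea.} Put the invariant on the long window. For a call $\satW(W\lbr j\rbr,u,k,N)$, require $\{W\}\subseteq\{W'\}$, where $W$ is $W'$ or a nested subwindow of it. Then:
\begin{itemize}
\item the continuation call $\satW(W\lbr j+1\rbr,\dots)$ concerns the same long window $W$, so the invariant is immediate;
\item the subwindow call is on $W_j\lbr 0\rbr$, whose long window is $W_j$, and $\{W_j\}\subseteq\{W\}\subseteq\{W'\}$ by the definition of members.
\end{itemize}
No gluing and no appeal to Lemma~\ref{k2kplusone-window} is needed.
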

\begin{proof} 
By induction on the recursion depth:
\begin{enumerate}
    \item
    If $W=\voidpair$, then we are done since $\{\voidpair\lbr j\rbr \}=\emptyset\subseteq\{W\}$\\
    else, suppose $W\lbr j\rbr =\shrinkwin{j}{v}{W}{j+d(u)}$; then $\{W\lbr j\rbr \}=\{v_i\colon i\in\interf{j}{j+d(u)}\cup\bigcup_{i\in[j:j+d(u)[}\{W_i\lbr 0\rbr \}$. Since a) $\{v_i\colon i\in\interf{j}{j+d(u)}\}\subseteq \{v_i\colon i\in\interf{0}{\uplim{u}}\}\subseteq \{W\}$ and b) by IH $\{W_i\lbr 0\rbr \}\subseteq \{W_i\}$ and $\{W_i\}\subseteq \{W\}$, we are done. \\
    \item Initially, it is true for $\satW(W'\lbr 0\rbr )$ by 1) above. It remains true for the subsequent calls $\satW(W_j\lbr 0\rbr ,\cdots)$ since by 1)  $\{W_j\lbr 0\rbr \}\subseteq \{W_j\}$ and since $\{W_j\}\subseteq \{W\}$, we have by IH $\{W\}\subseteq \{W'\}$. It also remains true for the calls $\satW(W\lbr j+1\rbr ,\cdots)$ since as a partial window $\{W\lbr j+1\rbr \}\subseteq \{W\}$ and we conclude again by IH. 
    \end{enumerate}
\end{proof}

\begin{lemma}[Soundness]\label{soundness}
If $u'$ is a $\Log2$-satisfiable \CCS\ then the call $\sat(u')$ returns \true. 
\end{lemma}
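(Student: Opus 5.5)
The plan is to show that if $u'$ is $\Log2$-satisfiable, then every non-deterministic choice in the run of $\sat(u')$ can be resolved by a \emph{satisfiable} witness, so that every conjunct evaluates to \true. The proof will go by induction on the pair (modal degree of the current \CCS, remaining counter $N$), ordered lexicographically. To make this work I will strengthen the statement to cover $\satW$ as well.

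\textbf{Auxiliary claim.} If $W$ is a $(k,d(u),d)$-partial window for $u$ that occurs as a partial window of some $(k,\infty,d)$-window $\tilde W$ for $(u,v_0)$ whose members are all $\Log2$-satisfiable, then $\satW(W,u,k,N)$ can return \true\ for every $N$.

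\textbf{Reduction of $\sat$ to the claim.} Let $M,x\models u'$. The test $u'\neq\{\bot\}$ holds. Fix $\neg\Box\phi\in u'$. By Corollary~\ref{existence-of-window}, the call $\chooseCCS(\{\neg\phi\}\cup\Bom(u'))$ can pick a satisfiable $v_0$, and there is a $(d(u'),\chi(u'),\chi)$-window for $(u',v_0)$ all of whose members are satisfiable. Lemma~\ref{prolongation-to-infinite-window} then turns this into a maximal window $\tilde W$. The one real mismatch here is $\chi$ versus $d$ as the length parameter of the subwindows. I expect to handle it by reusing the same pumping argument at every level of the recursion, so that all subwindows become infinite, and then simply truncating to length $d(\cdot)$. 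The truncation does not affect the membership conditions, since those are local between consecutive nodes. Membership of the resulting members in $\{W\}$ is preserved, as in Proposition~\ref{proposition-on-members}. So $\chooseW$ may return the initial segment $\tilde W[0]$ of length $d(u')$, and it remains to apply the claim with $N=\chi(u')$.

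\textbf{Proof of the claim.} This is by induction on $N$ and on $d(u)$. If $N=0$ or $d(u)=1$, the call returns \true\ immediately. Otherwise three conjuncts must be checked.
\begin{enumerate}
\item $\sat(v_0)$ holds by the outer induction, because $v_0$ is a satisfiable \CCS\ and the recursion on $\sat$ strictly decreases the degree: its windows consist of sets of degree $<d(u)$, by the N.B.\ following the definition of members.
\item $\satW(W_0,v_1,k-1,\chi(v_1))$ holds by the induction hypothesis on degree. Indeed $W_0$ is the initial segment of the maximal subwindow of $\tilde W$ at position $0$, which is a $(k-1,\infty,d)$-window for $(v_1,v_0)$ with satisfiable members.
\item For $\nextW$, the shifted segment $\tilde W[1]$ is a $k$-continuation of $\tilde W[0]$. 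This holds with $\sqsubseteq^k$ as equality in the overlap, checked directly from the definition: the condition $v^2_i\in\CCS(v^1_i\cup\Bom(v^2_{i+1}))$ holds because $v^2_i=v^1_{i+1}$ is itself a \CCS\ containing $\Bom$ of its successor. So $\nextW$ can return $\tilde W[1]$, which again sits inside $\tilde W$, and the induction on $N$ applies.
\end{enumerate}

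\textbf{Expected main obstacle.} The bookkeeping in step 2 should be the hardest part: one must verify that the subwindows of the truncated segment are themselves segments of maximal satisfiable windows with the right indices, and in particular that the degree decreases strictly so that the mutual recursion between $\sat$ and $\satW$ is well founded. The first thing I would try is a single global choice made at the outset: take one satisfiable, fully infinite recursive window for each diamond formula. Every non-deterministic choice in the run is then read off from this fixed structure, and Proposition~\ref{proposition-on-members}(2) guarantees that all calls stay within it.
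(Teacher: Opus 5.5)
Your proposal is correct and is essentially the paper's argument. Take the satisfiable window from Corollary~\ref{existence-of-window}, let $\chooseW$ return its initial partial window and $\nextW$ its successive one-step shifts (which, unlike the paper, you actually check are $k$-continuations), keep every member inside this fixed window as in Proposition~\ref{proposition-on-members}, and conclude by induction on modal degree and on the counter $N$.

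Making the witness infinite at every level is harmless, and it even removes the paper's small length mismatch, where the shifts $W'[j]$ run past the end of a window of length $\chi(u')$. Two corrections to how you set it up, though. The object you need is a $(k,\infty,\infty)$-window, not the $(k,\infty,d)$-window your claim names, since the latter has subwindows of length only $d(v_{i+1})$. And recursive pumping is unnecessary: the construction in the proof of Lemma~\ref{from-model-to-window} already produces infinite sequences at every level directly from the model.
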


\begin{proof} 
Since $u'$ is $\Log2$-satisfiable, then $u'\neq \{\bot\}$. Hence the result of $\sat(u')$ rely on that of: \\
$\all \{\satW(\chooseW(u',\chooseCCS(\{\neg\phi\}\cup\Bom(u')),d(u')), u',d(u'),\uplim{u'})$\\
\indent $\colon\neg\Box\in u'\}$\\
We proceed by induction on $d(u')$\\
1) Case $d(u')=0$: then the set\\
    $\{\satW(\chooseW(u',\chooseCCS(\{\neg\phi\}\cup\Bom(u')),d(u')), u',d(u'),\uplim{u'})...\}$\\
    is empty. Hence $\sat(u')$ returns \true.\\
2) Case $d(u')\geq 1$. The induction hypothesis is IH$_1$: if $u$ is $\Log2$-satisfiable and $d(u)<d(u')$ then $\satW(u)$ returns \true. Now, for each $\neg\Box\phi\in u'$:\\
    2.1) if $k=0$ then by Corollary \ref{existence-of-window}, there exists $v'_0\in\CCS(\{\neg\phi\}\cup\Bom(u'))$ and there exists $W'$ a $(0,\chi(u'),\chi)$-window for $(u',v'_0)$ with all its $\CCS$ $\Log2$-satisfiable, namely $W'=\voidpair$.
    Thus by IH$_1$ (since $d(v'_0)<d(u')$), there exists $v'_0 \in \CCS(\{\neg\phi\}\cup \Bom(u'))$ such that $\sat(v'_0)$ returns \true.\\
    2.2) if $k>0$. By Corollary \ref{existence-of-window}, there exists $v'_0\in\CCS(\{\neg\phi\}\cup\Bom(u'))$ and there exists $W'$ a $(k',\chi(u'),\chi)$-window for $(u',v'_0)$ with all its $\CCS$ $\Log2$-satisfiable. We set:
    \begin{itemize}
        \item $\chooseCCS(\{\neg\phi\}\cup\Bom(u'))=v_0$
        \item $\chooseW(u',v'_0,k)=W'\lbr 0\rbr $
        \item and for each subsequent call $\satW(W\lbr 0\rbr ,u,k,\uplim{u})$ \\
        let $\nextW(W\lbr j\rbr ,u,k)=W\lbr j+1\rbr $ for $j\in\intero{0}{\chi(u)}$\\ \hfill(its $k$-continuation)
    \end{itemize}
Given that the initial call $\satW(W'\lbr 0\rbr ,u',k',\chi(u'))$, and that all \CCS\ of $W'$ are $\Log2$-satisfiable, then all calls $\satW(W\lbr j\rbr ,u,k,N)$ return \true. This can be proved by the following nested induction on $(k,N)$:\\
\begin{itemize}
        \item if $k=0$ (and $W=\voidpair$) or $N=0$ it is true since $\satW(W,\cdots)=\true$. 
        \item else ($k>0$ and $N>0$) with IH$_2$: $\satW(W\lbr j\rbr ,u,k,N)$ return \true; \\
        then: 
    \end{itemize}
    $\satW(W\lbr j\rbr ,u,k,N)$
            \begin{equation*}
        \begin{array}{ll}
        =&  \sat(v_0) \\ 
         &  \algand\,\,\satW(W_j\lbr 0\rbr ,v_{j+1},k-1,N)\\
         & \algand\,\,\satW(W\lbr j+1\rbr ,u,k,N-1)\\
        =&  \true \mbox{ (since by Prop. \ref{proposition-on-members}, $v_0\in \{W'\}$ and, as such,}\\
         &  \mbox{is satisfiable, thus $\sat(v_0)$ returns $\true$ by IH$_1$)}\\ 
         &  \algand\,\,\true \mbox{(by IH$_2$ since $(k-1)+N<k+N$)}\\
         & \algand\,\,\true \mbox{(by IH$_2$ since $k+(N-1)<k+N$)} 
        \end{array} 
        \end{equation*}
        In particular, $\satW(W'\lbr 0\rbr ,u',k',\uplim{u'})$ returns \true, and so does \\
     $\satW(\chooseW(u',\chooseCCS(\{\neg\phi\}\cup\Bom(u')),k'), u',k',\uplim{u'})$. \\ Consequently, $\sat(u')$ returns \true\ too.
  \end{proof}

For proving the completeness of this algorithm, we need to transform a \true\ into a model. To this aim we define the notion of satisfiability of a window. 
\paragraph{Satisfiability of window}
Let $M=(S,R,V)$ be a $2$-dense model and $x\in S$. Let $W$ be a $(k,n,\lambda)$-window for $(u,v_0)$. We say that $M$ satisfies $W$ at $x$, denoted by $M,x\models W$ iff:
\begin{itemize}
    \item $W=\voidpair$
    \item or, if $W=\win{v}{W}{n}$
    \begin{itemize}
        \item $M,x\models u$
        \item $\exists y_0\in R(x)\colon M,y_0\models v_0$
        \item if $i\in\intero{0}{n}$: 
        $\exists y_{i+1}\in R^-(y_{i})\cap R(u)\colon M,y_{i+1}\models v_{i+1}$ and $M,y_{i+1}\models W_{i}$
    \end{itemize}
\end{itemize}

\begin{lemma}[Completeness]\label{completeness}
Given a \CCS\ $u$ and a $(k,d(u),d)$-window for $(u,v_0)$, then:
\begin{itemize}
    \item [$\bullet$] if $\sat(u)$ returns \true\ then $u$ is $\Log2$-satisfiable
    \item [$\bullet$] if $d(u)\neq 0$ and $\satW(W,u,k,N)$ returns \true\ then $W$ is $\Log2$-satisfiable
\end{itemize}
\end{lemma}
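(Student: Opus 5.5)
We prove both items simultaneously by induction on $d(u)$, with a nested induction that follows the recursion of $\satW$. In the nested induction we do not prove satisfiability of a finite window directly. Instead we extract, from an accepting run, an infinite recursive structure that plays the role of a $(k,\infty,d)$-window, and from it we build a $2$-dense model by disjoint unions and gluing. Recall that $2$-dense frames are closed under disjoint union, which lets us build a model for each $\neg\Box\phi$ separately and then glue them below a root world.

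\emph{Base case and reduction of $\sat$ to $\satW$.} Suppose $\sat(u)$ returns \true. Then $u\neq\{\bot\}$, so $u$ is classically consistent. If $d(u)=0$, a single irreflexive world with no successor satisfies $u$. This frame is trivially $2$-dense, because $R=\emptyset$. If $d(u)\geq 1$, then for each $\neg\Box\phi\in u$ there is $v_0^\phi\in\CCS(\{\neg\phi\}\cup\Bom(u))$ together with an accepting call $\satW(W^\phi,u,d(u),\uplim{u})$. By the second item, which we will already have for this $u$, we get a $2$-dense model $M^\phi$ and a world $x^\phi$ with $M^\phi,x^\phi\models W^\phi$. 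Take the disjoint union of the $M^\phi$, identify all the $x^\phi$ into one fresh root $x$, and set the valuation at $x$ from the atoms of $u$. The root's successors are the $y_0^\phi$ and all intermediary $y_i^\phi$; each of these satisfies $\Bom(u)$ by the window conditions. Density is preserved, because every edge $(x,y_i)$ is witnessed by $(x,y_{i+1}),(y_{i+1},y_i)$. A truth lemma on $\CSF(u)$ then gives $M,x\models u$.

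\emph{Main step: from an accepting $\satW$ run to a model.} We unfold the accepting run. The successive \nextW\ choices produce $W[0],W[1],\dots$, each a $k$-continuation of the previous one. By Lemma~\ref{k2kplusone-window}, iterated, they glue into a single $(k,N+d(u),d)$-window. Because the run stops at $N=\uplim{u}$, the pigeonhole argument of Lemma~\ref{prolongation-to-infinite-window} yields a repetition, and hence an infinite sequence of nodes $(v_i)_{i\geq0}$ with subwindows $(W_i)_{i\geq 0}$. Every $v_i$ has been submitted to $\sat$, so $\sat(v_i)$ returns \true. Since $d(v_i)<d(u)$, the outer induction hypothesis gives models for the $v_i$. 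Every $W_i$ has been submitted to $\satW$ with parameter $k-1$, so the inner induction on $k$ gives models for the subwindows.

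\emph{Assembling the model.} Let $x$ be a world for $u$. Add worlds $y_i$, each the root of a model of $v_i$, with edges $x\to y_i$ and $y_{i+1}\to y_i$. Between $y_{i+1}$ and $y_i$ we plug in the model of the subwindow $W_i$, rooted at $y_{i+1}$ and with target $y_i$. The infinite chain $y_{i+1}Ry_i$ is exactly what makes the edges $(x,y_i)$ $2$-dense. Edges inside subwindows are made dense recursively by the same construction. The edges coming from the models of the $v_i$ are dense by induction, and disjoint union preserves density. The $\Box$-constraints hold because $v_i\in\CCS(\Bom(u)\cup\Bom(v_{i+1}))$.

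\emph{Expected obstacle.} The delicate point is that $\satW$ only ever inspects windows of length $d(u)$, while the model needs the whole infinite chain. We must therefore show that the gluing is legitimate. This is exactly the non-interference statement of Proposition~\ref{proposition-degree}: a boxed formula of $y_{i+d(u)}$ cannot reach $y_i$. Consequently, the union of local constraints implies the global $\Box$-rule, and this must hold at every nesting level. The case $d(u)=1$ has to be treated separately, because $\satW$ returns \true\ immediately there. In that case all nodes of the window have degree $0$, so we can take $v_i=\Bom(u)$-saturations, which are classically consistent by the choice made in $\chooseW$, and repeat the same node forever.
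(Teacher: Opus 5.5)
Your architecture is the paper's. You do induction on $d(u)$, glue the successive \nextW\ choices with Lemma~\ref{k2kplusone-window}, and loop them with the pigeonhole argument of Lemma~\ref{prolongation-to-infinite-window}. You then apply the induction hypothesis to every $v_i$ handed to \sat\ and every $W_i$ handed to \satW, and hang the chains $\cdots y_2\to y_1\to y_0$ below a fresh root with edges to every $y_i$.

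\textbf{The step that fails is your treatment of $d(u)=1$.} You say the nodes are classically consistent ``by the choice made in \chooseW''. But when $\sat(u)$ calls \satW\ with $d(u)=1$, \satW\ returns \true\ before inspecting its window argument. So an accepting run certifies nothing: $\chooseCCS(\{\neg\phi\}\cup\Bom(u))$ may have returned $\{\bot\}$, and \chooseW\ its failure window. For $d(u)=1$ the call $\sat(u)$ only tests $u\neq\{\bot\}$. Concretely, take $u=\{\neg\Box\neg(p\wedge\neg p)\}$, a $\CCS$ with $d(u)=1$ and $\Bom(u)=\emptyset$. Since $\CCS(\{\neg\neg(p\wedge\neg p)\})=\emptyset$, $u$ is unsatisfiable, yet $\sat(u)$ returns \true. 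So the first item is false for the algorithm as written, and no argument can close this case. \satW\ itself must be patched, e.g.\ by evaluating $\sat(v_0)$ before the $d(u)=1$ shortcut. The paper's proof has the same hole: it asserts that all calls $\sat(v_i)$ returned \true, although for $d(u)=1$ none is ever made. You located the right spot, but your justification for it is wrong.

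\textbf{A second omission, easily repaired, is density at the bottom of the recursion.} When $k=1$ the subwindows are $\voidpair$, so there is no $W_i$-model to plug in between $y_{i+1}$ and $y_i$. Your ``made dense recursively by the same construction'' bottoms out exactly there, and ``repeat the same node forever'' produces a chain that is not $2$-dense. The paper's ``if $k=1$ then we are done'' skips the same point. The nodes $v_{i+1}$, $i\geq 0$, of a window with $k=1$ have modal degree $0$. So you can make those worlds reflexive without affecting truth, and then $(y_{i+1},y_i)\in R^2$ through the loop.

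\textbf{Two further points of bookkeeping.}
\begin{itemize}
\item Item 2 cannot mean $M,x\models W$ in the paper's sense, because that clause contains $M,x\models u$, while a single \satW\ call never looks at the other $\Diamond$-formulas of $u$. Your gluing of the $x^\phi$ already relies on a weaker notion: a root satisfying the propositional and $\Box$-part of $u$, with one chain below it. Item 2 should be stated for that notion.
\item Identifying the target of the $W_i$-model with $y_i$ is actually needed: in the paper's plain disjoint union, the witness $z$ claimed for $(y^{l,i+1},y^{l,i})$ does not exist. But the identification is only legitimate if the induction hypothesis delivers labelled, Hintikka-style structures, where every world carries its $\CCS$ and truth follows from local conditions. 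Then merged worlds can be checked through their labels. Worlds of arbitrary models cannot be merged like that.
\end{itemize}
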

\begin{proof}
    We construct $M=(S,R,V)$ by induction on $d(u)$. Let $\neg\Box\phi_0,\cdots,\neg\Box\phi_{n}$ be the $\Diamond$-formulas of $u$. In what follows we define $V_x$ by: $V_x(p)=\{x\}$ if $p\in x$ and else $V_x(p)=\emptyset$, for all $p\in\At$. \\
    If $d(u)=0$ the model $M=(\{u\}, \voidsequence,V_u)\models u$. Else, 
    \begin{itemize}
        \item Firstly, for each $\neg\Box\phi_l\in u$ suppose that
        $\sat(\chooseCCS(\{\neg\phi_l\}\cup \Bom(u)))$ returns \true. By IH, $\{\neg\phi_l\}\cup \Bom(u)$ is true in some $\Log2$-model: 
        $$(S^{l},R^{l},V^{l}),v^{l}\models \{\neg\phi_l\}\cup \Bom(u)$$
        \item Secondly, for each $\neg\Bom\phi_l\in s$ suppose that the call
        $$\satW(\chooseW(u,\chooseCCS(\{\neg\phi_l\}\cup\Bom(u)),k), u,k,\uplim{u})$$
        returns \true. Let $v_0$ be the \CCS\ chosen by $\chooseCCS$, and let $W^0=\win{v^0}{W}{u}$ the $(k,d(u),d)$-window for $u$ chosen by $\chooseW$. 
        Let $W^{j}=\win{v^j}{W^j}{u}$ (for $i\in\interf{1}{\uplim{u}}$), be the $(k,d(u),d)$-windows for $u$ chosen by the successive recursive calls to $\nextW$ (which succeed by hypothesis). Since each $W^{j+1}$ is a $k$-continuation of $W^j$, by repeated application of lemma \ref{k2kplusone-window}, with $(v_i,W_i)=(v^i_0,W^i_0)$ we obtain $\win{v}{W}{\uplim{u}}$ as a $(k,\uplim{u},d)$-window for $(u,v_0)$. Now, by applying lemma \ref{prolongation-to-infinite-window}, we can extend it to a $(k,\infty,d)$-window for $(u,v_0)$: 
        $W=\infwin{v}{W}$ where, beyond $\uplim{u}$, all $v_i$ and $W_i$ are copies of a $v_j$ and a $W_j$ with $j\leq \uplim{u}$. Since by hypothesis, for $i\geq 0$ all calls $\sat(v_i)$ and $\satW(W_i,v_{i+1},k+1,\chi(v_{i+1}))$ returns \true, then by induction hypothesis $v_i$ and $W_i$ are $\Log2$-satisfiable, let $M^0,x^0\models v_0$ and $M^{i+1},x^{i+1}\models W_i$ (this implies $M^{i+1},x^{i+1}\models v_{i+1}$ by definition).  \\
        In fact since they all depend on the formula $\neg\Box\phi_l$ involved, we add $l$ in the superscript giving: $W^{l}$, $v^{l}_i$, $W^{l}_i$, instead of just $W$, $v_i$ and $W_i$ and we write $M^{l,i},y^{l,i}\models W^{l}_{i-1}$, and $M^{l,i},y^{l,i}\models v_i^{l}$,  with $M^{l,i}=(S^{l,i},R^{l,i},V^{l,i})$. We merge these models into one, for each $\neg\Box\phi_l$ formula of $u$: $M^{l}=\bigsqcup_{i\geq 0} M^{l,i}$. \\
        Putting all things together, we define:
        \[M'=(S',R',V')=\bigsqcup_{l\in[1:n],\neg\Box\phi_l\in u} M^{l}
        \]
        \end{itemize}
        $M'$ is a $2$-dense model since it is the disjoint union of $2$-dense models. It remains to connect it with $u$ seen as a possible world to form the final model $M=(S,R,V)$: 
        \begin{itemize}
            \item $S=S'\sqcup \{u\}$
            \item $R''=R'\bigsqcup_{l\in[0:n],\neg\Box\phi_l\in u,i\geq  0}\{(u,y^{l,i})\}$
            \item $R=R''\sqcup\bigsqcup_{l\in[0:n_k],\neg\Box\phi_l\in u,i\geq 0}\{(y^{l,i+1},y^{l,i})\}$
            \item for each $p\in\At$: $V(p)=V'(p)\sqcup V_{u}(p)$
            \end{itemize}
        Now, it is time to check that 1. $M$ is $2$-dense, and (truth lemma) both 2. $M,u\models u$ and 3. $M,u\models  W^{k,l}$ are true:
    \begin{enumerate}
        \item Let $(x,y)\in R$: 
        \begin{itemize}
            \item if $(x,y)\in R''$, i.e.\ $(x,y)=(u,y^{l,i})$ then since $(u,y^{l,i+1})\in R$ and $(y^{l,i+1},y^{l,i})\in R$, we are done;
            \item if $(x,y)\in R'$ and $\not\in R''$, i.e.\ $(x,y)=(y^{l,i+1},y^{l,i})$, then:
            \begin{itemize}
                \item if $k=1$ then we are done;
                \item else since $W_i^{l}$ is $\Log2$-satisfiable, $\exists z\colon (y^{l,i+1},z)\in R$ and $(y^{l,i},z)\in R$.
            \end{itemize}
        \end{itemize}
        \item We only treat the case of modal formulas. \\
        $\Diamond$-formulas: Let $\neg\Box\neg\phi_l\in u$, since $\neg\phi_l\in v^{l}_0$ and $M,y^{l,0}\models v^{l}_0$, and  $(u,y^{l,0})\in R$, we are done;\\
        $\Box$-formulas: Let $\Box\neg\phi_l\in u$ and let $(u,y^{l,i})\in R$ for some $i$, $\neg\phi_l\in\Bom(u)$ hence $\neg\phi_l\in v_i^{l}$ which is a member of $W^{l}$, the $(k,\infty,\infty)$-window for $(u,v_0^{l})$ defined above. But since we added edges between worlds $y_i^{l}$ of distinct models, we must check that still $M,y_i^{l}\models v_i^{l}$ since $v_i^{l}$ may contain a $\Box$-formula that would be unsatisfied in $M$. We do so by a short induction on $i$; this is true for $M,y_{0}^{l}\models v_0^{l}$ (since no edge from $v_0^{l}$ were added); by definition of windows, we have that $\Bom v_{i+1}^{l}\subseteq v_i^{l}$ and since $M,y_i^{l}\models v_i^{l}$ (by IH on $i$) hence $M,y_{i+1}^{l}\models v_{i+1}^{l}$ and we are done again. 
        \item 
        \begin{itemize}
            \item [a)] $M,u\models u$ 
        \item [b)] $ y_0^{l}\in R(u)\colon M,y_0^{l}\models v_0^{l}$, 
        \item [c)] let $i\geq 0\colon y_{i+1}^{l}\in R^-(y_i^{l})\cap R(u)$; as seen just above $M,y_{i+1}^{l}\models v_{i+1}^{l}$ and, last, $M,y_{i+1}^{l}\models W_i^{l}$ (main induction hypothesis).  
        \end{itemize}
    \end{enumerate}
    All in all, $M,u\models u$ and  for all $\neg\Box\phi_l\in u$ $M,u\models W^{l}$
\end{proof}

\begin{lemma}
    The space needed for the algorithm is ${\cal O}(\lgu^3.2^{d(u)^5}))$. 
\end{lemma}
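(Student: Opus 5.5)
The bound is proved in the usual way for recursive non-deterministic procedures. The space used is the maximal recursion depth times the size of one activation frame, plus one global term for counters. By Savitch's theorem, working in non-deterministic space costs only a squaring, and that squaring stays inside the para-$\PSPACE$ form $f(d(u)).p(\lgu)$. Throughout, the input $\CCS$ is the top-level $u$; every set met during the run is a member of some window, so by the N.B. after the definition of $\{W\}$ it has length at most $c_{\mathrm{sf}}.\lgu$ and degree strictly below that of its parent.

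\textbf{Step 1: size of one frame.} A call $\satW(W,u,k,N)$ stores a single $(k,d(u),d)$-window, not a $(k,\chi(u),\chi)$ one. Such a window has at most $s(k)\leq Q(d(u))$ nodes by the recurrence in the proof of Lemma \ref{prolongation-to-infinite-window}, with $k\leq d(u)$; this is polynomial in $d(u)$ of degree $d(u)$, and I will bound it by $d(u)^{d(u)}$. Each node costs $O(\lgu)$ bits. The frame also holds the counter $N\leq\chi(u)=2^{P(\lgu)}+d(u)$, which needs $O(P(\lgu))$ bits, where $\deg P = k+1\leq d(u)+1$. The continuation computed by $\nextW$ is built in place and needs the same space. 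One frame therefore costs $O(\lgu.d(u)^{d(u)}+\lgu^{d(u)+1})$.

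\textbf{Step 2: recursion depth.} The third conjunct of $\satW$ is a tail call that only replaces the current window by its continuation and decrements $N$. I will implement it as a loop running up to $\chi(u)$ times, so it adds no depth. The only genuine nested calls are:
\begin{itemize}
\item $\sat(v_0)$, which lowers the degree $d$ by at least $1$;
\item $\satW(W_0,v_1,k-1,\chi(v_1))$, which lowers $k$ and passes to $v_1$ with $d(v_1)<d(u)$.
\end{itemize}
Each nested call thus strictly lowers $d$, so the depth is at most $O(d(u))$. Inside $\sat$, the $\Diamond$-formulas are handled one after another with reuse of space, which costs an extra $O(\lgu)$ per level. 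Multiplying, the deterministic simulation of a non-deterministic computation in space $S=O(d(u).(\lgu.d(u)^{d(u)}+\lgu^{d(u)+1}))$ runs in space $O(S^2)$.

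\textbf{Step 3: matching the stated form, the main obstacle.} The claim has the shape ${\cal O}(\lgu^3.2^{d(u)^5})$, so the degree-$(d(u)+1)$ dependence on $\lgu$ coming from $P$ must be removed. I expect this to be the hardest step. My first attempt will be to note that the loop counter need not count to $\chi(u)$ in binary alongside full windows. By the pigeonhole argument of Lemma \ref{prolongation-to-infinite-window}, it is enough to detect a repeated window. Alternatively, I can measure $\chi$ more finely: the number of distinct $(k,d(u),d)$-windows is at most $2^{c.\lgu.Q(d(u))}$, so $\log\chi(u)=O(\lgu.d(u)^{d(u)})$. That bound is linear in $\lgu$, and I will over-approximate $d(u)^{d(u)}$ by $2^{d(u)^2}$. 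Then $S=O(\lgu.d(u).2^{d(u)^2})$, and Savitch gives $O(\lgu^2.2^{2d(u)^2+2\log d(u)})$. The remaining slack, namely the extra factor $\lgu$ for the $\CCS$ choice and the exponent $d(u)^5$, safely absorbs the per-level bookkeeping and the constants $c_{\mathrm{sf}}$. The result is ${\cal O}(\lgu^3.2^{d(u)^5})$, which is of the form $f(d(u)).p(\lgu)$ and places $\Log2$-satisfiability in para-$\PSPACE$.
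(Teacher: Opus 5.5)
Your argument is correct and has the same skeleton as the paper's. The continuation chain is a tail call, so each recursion level holds one $(k,d(u),d)$-window together with $u$, $k$ and a counter. The nested calls $\sat(v_0)$ and $\satW(W_0,v_1,k-1,\uplim{v_1})$ lower the modal degree, so there are $O(d(u))$ levels. A window has size about $\lgu\cdot Q(d(u))$.

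You depart from the paper in two places.

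First, the paper solves a recurrence directly for the space of the non-deterministic algorithm, ending with the crude bound $2\tau^3$. It leaves the passage to deterministic space implicit in the corollary. You instead get the linear bound $S=O(d(u)\cdot\mathrm{frame})$ and then apply Savitch. This gives the sharper $O(\lgu^2\cdot 2^{2d(u)^2+2\log d(u)})$ for a deterministic procedure, which is what para-$\PSPACE$ literally requires. Your bound on $S$ by itself already proves the lemma as stated.

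Second, and more valuably, you account explicitly for the bit-length of the counter $N\le\uplim{u}$. The paper just asserts $\tau\le 4\lgx{W^+}$. But with $\uplim{u}=2^{P(\lgu)}+d(u)$ and $\deg P=k+1$, as literally defined in the proof of Lemma~\ref{prolongation-to-infinite-window}, the counter occupies $\Theta(\lgu^{k+1})$ bits. For $k=d(u)$ this is not of the form $f(d(u))\cdot p(\lgu)$. Your refinement is exactly what makes that step true: you bound the number of windows by $2^{c_{\mathrm{sf}}\lgu\,Q(d(u))}$ rather than replacing $Q(d(u))$ by $Q(\lgu)$. The change is harmless for soundness and completeness, because any $\uplim{u}$ exceeding the number of distinct windows by $d(u)$ suffices for the pigeonhole argument of Lemma~\ref{prolongation-to-infinite-window}.

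Your ``first attempt'' at Step~3, detecting a repeated window, is only sketched. It is the refined $\uplim{u}$ that actually carries that step.
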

\begin{proof}
    First, we recall that functions \all\ and \algand\ are lazily evaluated. \\
    Obviously, \chooseCCS\ runs in polynomial space. 
    On another hand, as seen in lemma \ref{prolongation-to-infinite-window}, the size of each $(k,d(u),d)$-window for $u$ is bounded by $P(\lgu)$. Thus the functions \chooseW\ and \nextW\ run in polynomial space. It is also clear that functions $\sat$ and $\satW$ terminate since their recursion depth is bounded (respectively by $\lgu$ and by $N$) as well as their recursion width.
    Among all of these calls, let $w^{d(u)}$ be the argument with modal depth $d(u)-1$ for which $\sat$ has the maximum cost in terms of space, i.e.\ such that $\spc(\sat(w^{d(u)})$ is maximal. As well, among subwindows of $W$, let $W^{d(u)}$ be the $(k,d(v^{d(u)}),d)$-window, with $d(v^{d(u)})=d(u)-1$, for which the space used by $\satW(W^{d(u)},v^{d(u)},k,\uplim{v^{d(u)}})$ is maximal. \\
    Let $W^0=\win{v}{W^0}{d(u)}$ be the $(k,d(u),d)$-window for $(u,v_0)$ chosen by $\chooseW$, and for $i\in\interf{1}{\uplim{u}}$, let $W^{j}$ be the $(k,d(u),d)$-windows for $(u,v_i)$ chosen by the successive recursive calls to $\nextW$. \\
    Let us firstly evaluate the memory cost of $\satW(W^0,u,k,N)$, denoted by $\spc(\satW(W^0,u,k,N)))$. The function $\satW$  keeps its arguments in memory during the calls $\sat(v_0)$ and $\satW(W^0,u,k,N)$, then forget them and continue with $\satW(W^1,u,k,N-1)$. Let $\tau=\lgx{W^0}+\lgu+\lgx{k}+\lgx{\uplim{u}}$, we have $\tau\leq 4.\lgx{W^+}$, and the following inequalities:\\ 
    $\spc(\satW(W^0,u,k,\uplim{u}))$\\
\begin{minipage}{\columnwidth}
\begin{supertabular}{lll}
    $\leq$& $\max \{$&$\tau+\spc(\sat(v_0)),$\\
    &&$\tau+\spc(\satW(W_i^0,v_i^0,k-1,\uplim{v_i^0})),$\\
    &&$space(\satW(W^1,u,k,\uplim{u}-1))\}$\\
    $\leq$&$ \max \{$&$\tau+\spc(\sat(v^{d(u)})),$\\
    & & $\tau+\spc(\satW(W^{d(u)},v^{d(u)},k-1,\uplim{v^{d(u)}})),$\\
    & & $space(\satW(W^1,u,k,\uplim{u}-1))\}$\\
    $\leq$ & $ \max \{$ & $\tau+\spc(\sat(v^{d(u)})),$\\
    & & $\tau+\spc(\satW(W^{d(u)},v^{d(u)},k-1,\uplim{v^{d(u)}})),$\\
    & &$\max\{\tau+\spc(\sat(v^{d(u)})),$\\
    & & \hspace{0.7cm}$\tau+\spc(\satW(W^{d(u)},v^{d(u)},k-1,\uplim{v^+})),$\\ 
        & & $\hspace{0.7cm}\space(\satW(W^2,u,k,\uplim{u}-2))\}\}$\\
    $\leq$ & $\max \{ $ & $\tau+\spc(\sat(v^{d(u)})),$\\
    & & $\tau+\spc(\satW(W^{d(u)},v^{d(u)},k-1,\uplim{v^{d(u)}})),$\\
    & & $\spc(\satW(W^2,u,k,\uplim{u}-2))\}$\\
    $\leq $ &  $ \max \{ $ & $ \tau+\spc(\sat(v^{d(u)})) ,$\\
 & & $ \tau+\spc(\satW(W^{d(u)},v^{d(u)},k-1,\uplim{v^{d(u)}})), $\\
     & & $  \hspace{1.2cm}\spc(\satW(W^{\uplim{u}},u,k,0))\} $ \\
     $ \leq $ & & $ \tau+\max \{\spc(\sat(v^{d(u)})),$\\
    & & $ \hspace{1.2cm}\spc(\satW(W^{d(u)},v^{d(u)},k-1,\uplim{v^{d(u)}}))\}$\\
      $ \leq $ & & $ \tau+\max \{\spc(\sat(v^{d(u)})),$\\
    & &  \hspace{1.2cm} $\tau+\max \{\spc(\sat(v^{d(u)})),$\\
    & & \hspace{1.2cm} $\spc(\satW(W^{d(u)},v^{d(u)},k-2,\uplim{v^{d(u)}}))\}\}$\\

$ \leq $ & & $ 2.\tau+\max \{\spc(\sat(v^{d(u)})),$\\
    & & \hspace{1.3cm} $\spc(\satW(W^{d(u)},v^{d(u)},k-2,\uplim{v^{d(u)}}))\}$\\
      $ \leq $ & & $ k.\tau+\max \{\spc(\sat(v^{d(u)})),$\\
    & & \hspace{1.3cm} $\spc(\satW(W^+,v^+,0,\uplim{v^{d(u)}}))\}$\\
       $ \leq $ & & $ k.\tau+\spc(\sat(v^{d(u)}))$\\
       &&\\
\end{supertabular}
\end{minipage}

Now, concerning the function $\sat$, it also keeps track of its argument in memory during recursion in order to range over its $\Diamond$-formulas. obviously, in general, calls $\satW$ need more space than $\sat$ calls. For $\neg\Box\phi\in u$, let $W^{0,\neg\Box\phi}$ be the $(k,d(u),d)$-window chosen by $\chooseW(u,\chooseCCS(\{\neg\phi\}\cup\Bom(u)),k), u,k,\uplim{u})$ (it exists, otherwise the algorithm stops). 
    Thus:\\
    $\spc(\sat(u))$\\
    $\begin{array}{ll}
     &  \leq \lgu+\max \{\spc(\satW(W^{0,\neg\Box\phi},u,k,\uplim{u}))\colon\neg\Box\phi\in u\}\\
    & \leq \lgu+k.\tau+\spc(\sat(v^{d(u)-1}))\\
    &\leq 2.\tau^2+\spc(\sat(v^{d(u)-1}))\leq\cdots\leq 2.\tau^2.d(u)\leq 2.\tau^3\\
    &\leq c.\lgx{W^{d(u)}}^3 \mbox{ for some constant $c>0$}
    \end{array}$
    
From lemma \ref{prolongation-to-infinite-window}, we know that $\card({W^{d(u)}})=Q(d(u))$ (for a polynomial $Q$ of degree $d(u)+1$), with elements of size bounded by $c_{\SF(u)}.\lgu$, hence $\lgx{W^{d(u)}}=c_{\SF(u)}.\lgu.Q(d(u))$ and $\spc(\sat(u))\leq c'.\lgu^3.d(u)^{d(u)^3+3}\leq c'.\lgu^3.2^{d(u)^5}$ for some constant $c'>0$.

\end{proof}

\begin{corollary}
    $\Log2$-satisfiability is in para-$\PSPACE$.
\end{corollary}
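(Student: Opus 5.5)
The plan is to assemble the corollary from the three analysis results just proved, and then match the resulting space bound against the Flum--Grohe definition of para-$\PSPACE$ recalled in the introduction, taking the modal depth $\kappa(s)=d(s)$ as the parameter.

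First, correctness. Given an input set $s$, the procedure nondeterministically picks $u=\chooseCCS(\{s\})$ and runs $\sat(u)$. By Prop.~\ref{prop-CCS}.\ref{Five}, $s$ is $\Log2$-satisfiable iff some $u\in\CCS(s)$ is. Soundness (Lemma~\ref{soundness}) shows that if such a $u$ exists, some run of $\sat(u)$ returns \true. Completeness (Lemma~\ref{completeness}) shows that any accepting run produces a $2$-dense model of $u$, hence of $s$. So the nondeterministic procedure accepts exactly the satisfiable inputs.

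Second, resources. By the preceding lemma, every run uses space $\mathcal O(\lgu^3\cdot 2^{d(u)^5})$. Since $\lgu$ is linear in $\lgx{s}$ and $d(u)=d(s)$, this is $f(d(s))\cdot p(\lgx{s})$ with $f(k)=2^{k^5}$ (computable) and $p$ a cubic polynomial. Termination of every branch is also needed, and it follows from the bounded recursion depth noted in that lemma.

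The main obstacle is determinization. The algorithm is nondeterministic (\chooseCCS, \chooseW, \nextW), while para-$\PSPACE$ asks for a deterministic machine. I would apply Savitch's theorem to the nondeterministic machine running in space $S(x)=f(\kappa(x))\,p(\lgx{x})$. This gives deterministic space $\mathcal O(S^2)=f(\kappa)^2\,p(\lgx{x})^2$, which is again of the form $f'(\kappa)\cdot p'(\lgx{x})$.

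Two points need care here:
\begin{itemize}
\item Savitch's theorem requires the space bound to be constructible. Since $f$ and $p$ are computable, the machine can compute $S(x)$ in advance, within space bounded by a function of the same shape.
\item Equivalently, one could argue directly that para-$\mathbf{NPSPACE}=$ para-$\PSPACE$.
\end{itemize}
Hence $\Log2$-satisfiability, parameterized by modal depth, lies in para-$\PSPACE$.
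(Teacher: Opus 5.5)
Your proposal is correct and follows the same route as the paper, whose proof simply instantiates the Flum--Grohe definition with $\kappa(u)=d(u)$, $f(k)=2^{k^5}$ and the cubic polynomial from the preceding space lemma. The paper's one-line proof leaves two points implicit that you make explicit: correctness, via the soundness and completeness lemmas, and the determinization of the nondeterministic procedure via Savitch's theorem (i.e.\ para-$\mathbf{NPSPACE}=$ para-$\mathbf{PSPACE}$), so your version is, if anything, more careful.
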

\begin{proof}
   If we go back to the definition of the para-$\PSPACE$ class and take $\kappa(u)=d(u)$, the function $f(\kappa(u))=2^{d(u)^5}$ is clearly computable and $c'.\lgu^3$ is indeed a polynomial. Hence $2$-satisfiability is in para-$\PSPACE$.  
\end{proof}

\section{Generalization to \emph{n}-dense logics}\label{general_case}
For handling the general case (say $\Box^m p\rightarrow\Box p$) we must take into account that between $u$ and $u_0$ we must introduce (at the first level) nodes $u_1$ up to $u_{m-1}$ (as well as the necessary subwindows). But now, a $n$-long window (of $n+1$ nodes and $n$ subwindows) will rather contain $m$-uples of nodes and $(m-1)$-uples of windows, $n-1$ of each.

Formally, Let $u$, $v_0$ be two \CCS, $k\leq d(u)$ and $n\geq d(u)$ and suppose we define windows for $\Log{m}$. 
A $(k,n,\lambda)$-window for $(u,v_0)$ denoted by $W$ is a pair $\pair{\mathcal N}{\mathcal W}$ (with $I=\interf{0}{\sizew}$, $I^-=\intero{0}{\sizew}$):\\
$\bullet$ if $k=0$, $\pair{\mathcal N}{\cal W}=\voidpair$ (empty window)\\
$\bullet$ if $k>0$, $\pair{\mathcal N}{\cal W}$ is a pair of two sequences: 
        \begin{itemize}
            \item [-]${\mathcal N}=(v_i^0,\cdots,v_i^{m-1})_{i\in I}$ is a sequence of $m$-uples of nodes s.th.:
        \begin{itemize}   
            \item [*] for $i<n$: $u_i^{m-1}=u_{i+1}^0$
            \item [*]$\forall i\in\intero{0}{\sizew}\forall 0<j<m-1\colon $\\
            $v_i^{m-1}\in\CCS(\Bom(u)\cup\Bom(v_{i+1}^0))$\\
            and if $i<n$: $v^j_i\in\CCS(\Bom(v^{j+1}_i))$,  \\
             and if $\sizew\neq \infty$ then and $v^j_n\in\CCS(\Bom(v^{j+1}_n))$
        \end{itemize}
        \item [-] ${\cal W}=(W_i^0,\cdots,W_i^{m-2})_{i\in I}$ is a sequence of $m$-uples of $(k-1,\lambda(v_{i+1}),\lambda)$-windows for $(v_i^j,v_i^{j+1})$. 
        \end{itemize}
This will look as in Fig. \ref{part_model5}. 
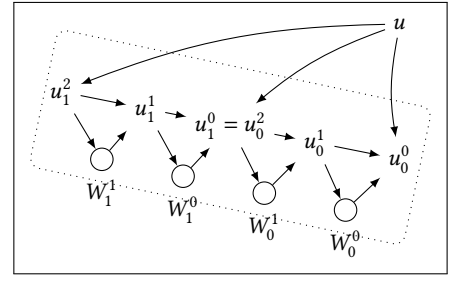
\begin{figure}[t]
\begin{centering}
  \fbox{
  \begin{tikzpicture}[scale=0.9]
\node (u) at (2,1) {$u$};
\node (u0) at (2,-1) {$u_0^0$};
\node (u1) at (0.75,-0.75) {$u_0^1$};
\node (u2) at (-0.5,-0.5) {$u_1^0=u_0^2$};
\node (u3) at (-1.75,-0.25) {$u_1^1$};
\node (u4) at (-3,0) {$u_1^2$};

\node[draw,circle] (w0) at (1.2,-1.75) {};
\node[draw,circle] (w1) at (0,-1.5) {};
\node[draw,circle] (w2) at (-1.2,-1.25) {};
\node[draw,circle] (w3) at (-2.4,-1) {};

\draw [->,>=latex] (u2) -- (u1);
\draw [->,>=latex] (u4) -- (u3);
\draw [->,>=latex] (u1) -- (u0);
\draw [->,>=latex] (u3) -- (u2);
\draw [->,>=latex] (u1) -- (w0);
\draw [->,>=latex] (w0) -- (u0);
\draw [->,>=latex] (u2) -- (w1);
\draw [->,>=latex] (w1) -- (u1);
\draw [->,>=latex] (u3) -- (w2);
\draw [->,>=latex] (w2) -- (u2);
\draw [->,>=latex] (u4) -- (w3);
\draw [->,>=latex] (w3) -- (u3);
\draw (w0.south) node[below]{$W^0_0$};
\draw (w1.south) node[below]{$W^1_0$};
\draw (w2.south) node[below]{$W^0_1$};
\draw (w3.south) node[below]{$W^1_1$};
\foreach \n in {u0,u2,u4} \draw [->,>=latex] (u)  to[bend right=10](\n);
\node[draw, dotted, rounded corners,rotate=-12, fit=(u0)(u1)(u2) (u3)(u4) (w1)(w2) (w3), inner sep=1pt, label=below:{\small }] {};
\end{tikzpicture}
}
\caption{A $2$-long window for $3$-density}\label{part_model5}
\Description{A $n$-dense window and its continuation}
\end{centering}
\end{figure} 

Definition of continuations must be modified so as the shift is now made of $m-1$ nodes from a window to its continuation and correspond to Fig.\ \ref{continuation2}. 
\begin{figure}[h]
\begin{centering}
  \fbox{
  \begin{tikzpicture}[scale=0.7]
\node (u) at (2,1) {$u$};
\node (u0) at (2,-1) {$u_0^0$};
\node (u1) at (0.75,-0.75) {$u_0^1$};
\node (u2) at (-0.5,-0.5) {$u_1^0$};
\node (u3) at (-1.75,-0.25) {$u_1^1$};
\node (u4) at (-3,0) {$u_1^2$};
\node (u5) at (-4.25,0.25) {$u_2^0$};
\node (u6) at (-5.5,0.5) {$u_2^1$};

\node[draw,circle] (w0) at (1.2,-1.75) {};
\node[draw,circle] (w1) at (-0.1,-1.5) {};
\node[draw,circle] (w2) at (-1.3,-1.25) {};
\node[draw,circle] (w3) at (-2.55,-1) {};
\node[draw,circle] (w4) at (-3.85,-0.75) {};
\node[draw,circle] (w5) at (-5.1,-0.5) {};

\draw [->,>=latex] (u2) -- (u1);
\draw [->,>=latex] (u4) -- (u3);
\draw [->,>=latex] (u1) -- (u0);
\draw [->,>=latex] (u3) -- (u2);
\draw [->,>=latex] (u6) -- (u5);
\draw [->,>=latex] (u5) -- (u4);
\draw [->,>=latex] (u1) -- (w0);
\draw [->,>=latex] (w0) -- (u0);
\draw [->,>=latex] (u2) -- (w1);
\draw [->,>=latex] (w1) -- (u1);
\draw [->,>=latex] (u3) -- (w2);
\draw [->,>=latex] (w2) -- (u2);
\draw [->,>=latex] (u4) -- (w3);
\draw [->,>=latex] (w3) -- (u3);
\draw [->,>=latex] (u6) -- (w5);
\draw [->,>=latex] (w5) -- (u5);
\draw [->,>=latex] (u5) -- (w4);
\draw [->,>=latex] (w4) -- (u4);
\foreach \n in {u0,u2,u4,u6} \draw [->,>=latex] (u)  to[bend right=10](\n);
\node[draw, dotted, rounded corners,rotate=-12, fit=(u0)(u1)(u2) (u3)(u4) (w1)(w2) (w3), inner sep=1pt, label=below:{\small }] {};
\node[draw, dashed, rounded corners,rotate=-12, fit=(u2)(u3)(u4)(u5)(u6)  (w3)(w4)(w5), inner sep=-1pt, label=below:{\small }] {};
\end{tikzpicture}
}
\caption{A $2$-long window for $3$-density (dotted part) and its continuation (dashed part)}\label{continuation2}
\Description{A $n$-dense window and its continuation}
\end{centering}
\end{figure}
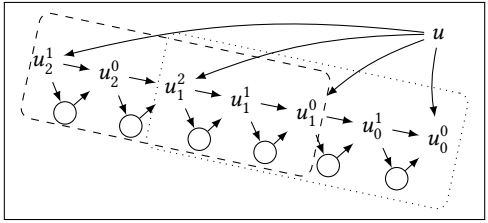 

All of lemmas \ref{k2kplusone-window}, \ref{prolongation-to-infinite-window}, \ref{from-model-to-window}, \ref{soundness} and \ref{completeness} transfer tediously but with no difficulty and since windows contain $m.d(u)$ nodes instead of $d(u)$, then $\uplim{u}$ becomes $2^{P(m.\lgu)}+m.d(u)$, and the size of $(k,d(u),d)$-windows is now $Q(m.d(u))$ for some polynomial $Q$ of degree $k$. This yields $\spc(\sat(u))\leq c'.\lgu^3.(m.d(u))^{d(u)^3+3}\leq c'.\lgu^3.2^{2.d(u)^5}$ for some constant $c'>0$ since $m$ is a constant. 

\begin{corollary}
    For all $n\in\nat\colon\Log{n}$-satisfiability is in para-$\PSPACE$.
\end{corollary}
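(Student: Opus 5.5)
The plan mirrors the proof of the corollary for $\Log2$, instantiated with the generalized windows of this section and the parameter $\kappa(s)=d(s)$. Fix $n$; the case $n=1$ is trivial, since $\Log{1}$ is just $\K$, whose satisfiability problem is in $\PSPACE$ and hence in para-$\PSPACE$. For $n\geq 2$, write $m=n$ for the density exponent, to match the notation of the generalized window definition. We run the same algorithm $\sat$, $\satW$, $\chooseW$, $\nextW$, but with $(k,d(u),d)$-windows whose nodes are $m$-tuples $(v_i^0,\dots,v_i^{m-1})$ and whose subwindows are $(m-1)$-tuples. Continuations are redefined so that the shift is $m-1$ nodes. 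Given an input $s$, we call $\sat(\chooseCCS(\{s\}))$.

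The correctness steps follow the $\Log2$ chain in the same order.

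\begin{enumerate}
\item \emph{Non-interference and gluing} (Lemma~\ref{k2kplusone-window}). The degree bound of Proposition~\ref{proposition-degree} must be redone along the flattened chain of $m\cdot\sizew$ nodes. Each step along a chain $v_i^{j+1}\to v_i^j$ still strips one $\Box$. So after $d(u)$ steps, inclusions $\Bom(v^2)=\Bom(v^1)$ hold, which is exactly what the gluing of a window with its continuation needs.
\item \emph{Pumping} (Lemma~\ref{prolongation-to-infinite-window}). Counting the \CCS\ in a $(k,d(u),d)$-window now gives at most $Q(m\cdot d(u))$ of them. Hence $\uplim{u}=2^{P(m\lgu)}+m\,d(u)$, and the pigeonhole repetition argument produces a maximal window verbatim.
\item \emph{From models to windows} (Lemma~\ref{from-model-to-window}). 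Given $(x,y_0)\in R$ in an $m$-dense model, iterate $R\subseteq R^m$ on the last edge of the current path. This yields the infinite sequence of $m$-tuples, and the inner edges $(v_i^{j+1},v_i^j)$ are themselves $R$-edges to which the induction on $k$ applies. Soundness (Lemma~\ref{soundness}) then follows word for word.
\item \emph{Completeness} (Lemma~\ref{completeness}). Build the model as a disjoint union, using that $m$-dense frames are closed under disjoint union. Add the edges from $u$ to every $v_i^0$ and the chain edges inside each tuple. Density of an edge $(u,v_i^{0})$ is witnessed by the $m$-path $u\to v_{i+1}^0=v_i^{m-1}\to\cdots\to v_i^0$. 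Density of an inner edge is witnessed by the corresponding satisfied subwindow.
\end{enumerate}

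For complexity, we repeat the space recurrence of the $\Log2$ space lemma with $\lgx{W^{d(u)}}\leq c\lgu\,Q(m\,d(u))$. This gives $\spc(\sat(u))\leq c'\lgu^3 (m\,d(u))^{d(u)^3+3}$. Because $m$ is a constant, this is at most $c'\lgu^3 2^{2d(u)^5}$ once $d(u)\geq$ some constant depending on $m$; smaller depths can be absorbed into the constant. Taking $f(k)=2^{2k^5}$, which is computable, and $p$ cubic then yields membership in para-$\PSPACE$.

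The main obstacle is step 1, the non-interference proposition for $m$-tuples. Each window position now hosts $m-1$ intermediate nodes, and the index bookkeeping of the overlap $v_i^{m-1}=v_{i+1}^0$ must be handled carefully. I would first flatten each window into a single linear sequence of length $m\sizew$ and prove the degree-decrease bound on that sequence by descending induction exactly as in Proposition~\ref{proposition-degree}. Only afterwards would I regroup it into tuples. This should make the remaining lemmas mechanical.
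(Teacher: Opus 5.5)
Your proposal is correct and follows the paper's own route. It uses windows generalized to $m$-tuples of nodes with continuations shifted by $m-1$ nodes, and the same chain of lemmas in the same order: non-interference and gluing, pumping with $\uplim{u}=2^{P(m\lgu)}+m\,d(u)$, model-to-window and soundness, completeness via disjoint unions. It ends with the same space bound $c'\lgu^3(m\,d(u))^{d(u)^3+3}\leq c'\lgu^3\,2^{2d(u)^5}$ with $\kappa=d$.

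Your separate treatment of $n=1$ and your remark that the final inequality needs a depth threshold depending on $m$ are sensible refinements of the paper's sketch. One wording to fix: in the model-to-window step, density must be re-applied to the newest edge leaving $x$ (the first edge of $x\to z_1\to\cdots\to z_{m-1}\to y_i$), not to the last edge of that path.
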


\section*{Conclusion}\label{conclusion}
Interestingly, the recent concept of windows seems to be useful apart from its initial aim of browsing a model/tableau in polynomial space. After having been applied, in a non recursive style to weak-density, it proves to be useful in a quite different case. In this paper, with the non-trivial generalization to \emph{recursive windows}, we could design an algorithms for $n$-dense logics which runs in polynomial space \emph{for fixed modal depth of the input}, and thus assesses the membership of the $n$-dense satisfiability problem in the para-$\PSPACE$ class. Hopefully, windows could be applicable to other open questions of complexity or decidability for logics having properties involving existence of intermediary worlds in frames --for example extension of $n$-dense logics to the multimodal case-- by defining specific window structures. In any case, we believe parameterized complexity has been globally understudied in the field of modal logic and deserves to be explored more at least for all these logics of rather high complexity.


\bibliographystyle{ACM-Reference-Format}
\bibliography{para-n-dense}
\end{document}